\documentclass[proceedings]{stacs}
\stacsheading{2010}{239-250}{Nancy, France}
\firstpageno{239}

\usepackage{amssymb,amsmath,amsthm}
\usepackage{graphicx}
\usepackage{subfigure}
\usepackage[vlined,ruled,linesnumbered]{algorithm2e}
\dontprintsemicolon
\usepackage{xspace}
\usepackage{path}

\newcommand{\savespace}[1]{\vspace{-0ex}}

\graphicspath{{fig/}}

\newcommand{\tthree}{\ensuremath{\mathrm{T}^3}}
\newcommand{\R}{\ensuremath{\mathbb{R}}}
\newcommand{\eps}{\ensuremath{\varepsilon}}
\newcommand{\opt}{\mbox{{\sc opt}}}
\newcommand{\weight}{\mathrm{weight}}

\newcommand{\wrt}{w.r.t.}

\newcommand{\tsp}{{\sc Tsp}}
\newcommand{\tspalpha}{\tsp$(\hspace*{1.5pt}\cdot\hspace*{2pt},\alpha)$}
\newcommand{\apx}{{\sc apx}\xspace}
\newcommand{\np}{{\sc np}\xspace}
\newcommand{\Reals}{\R}

\renewcommand{\leq}{\leqslant}
\renewcommand{\geq}{\geqslant}
\renewcommand{\le}{\leqslant}
\renewcommand{\ge}{\geqslant}
\DeclareMathOperator{\dist}{dist}

\begin{document}

\title[The Traveling Salesman Problem Under Squared Euclidean
Distances]{The Traveling Salesman Problem \\ Under Squared Euclidean Distances}

\author[tue]{M. de Berg}{Mark de Berg}
\address[tue]{Department of Mathematics and Computer Science, TU~Eindhoven,
  the~Netherlands.}
\email{mdberg@win.tue.nl}
\email{f.s.b.v.nijnatten@tue.nl}
\email{gwoegi@win.tue.nl}
\author[tue]{F. van Nijnatten}{Fred van Nijnatten}
\author[vu]{R. Sitters}{Ren\'e Sitters}
\address[vu]{Faculty of Economics and Business Administration,
  VU Amsterdam, the~Netherlands.}
\email{rsitters@feweb.vu.nl}
\author[tue]{G. J. Woeginger}{Gerhard~J.~Woeginger}
\author[wue]{A. Wolff}{Alexander Wolff}
\address[wue]{Lehrstuhl f\"ur Informatik I, Universit\"at
  W\"urzburg, Germany.}
\urladdr{http://www1.informatik.uni-wuerzburg.de/en/staff/wolff\_alexander}

\keywords{Geometric traveling salesman problem, power-assignment in
  wireless networks, distance-power gradient, NP-hard, APX-hard}
\subjclass{I.1.2 Algorithms, F.2.2 Nonnumerical Algorithms and Problems}

\begin{abstract}
  Let $P$ be a set of points in $\Reals^d$, and let $\alpha \ge 1$ be
  a real number.  We define the distance between two points $p,q\in P$
  as $|pq|^{\alpha}$, where $|pq|$ denotes the standard Euclidean
  distance between $p$ and $q$.  We denote the traveling salesman
  problem under this distance function by
  \tsp($d,\alpha$).  We design a 5-approximation algorithm for
  \tsp(2,2) and generalize this result to obtain an approximation
  factor of $3^{\alpha-1}+\sqrt{6}^{\,\alpha}\!/3$ for $d=2$ and all
  $\alpha\ge2$.

  We also study the variant Rev-\tsp\ of the problem where the
  traveling salesman is allowed to revisit points.  We present a
  polynomial-time approximation scheme for Rev-\tsp$(2,\alpha)$ with
  $\alpha\ge2$, and we show that Rev-\tsp$(d, \alpha)$ is \apx-hard if
  $d\ge3$ and $\alpha>1$.  The \apx-hardness proof carries over to
  \tsp$(d, \alpha)$ for the same parameter ranges.
\end{abstract}

\maketitle

\section{Introduction}

Motivated by a power-assignment problem in wireless networks (see
below for a short discussion of this application)
Funke~et~al.~\cite{flnl-papwc-08} studied the following special case
\tsp($d,\alpha$) of the Traveling Salesman Problem (\tsp) which is
specified by an integer $d\ge2$ and a real number $\alpha>0$.  The
cities are $n$ points in $d$-dimensional space $\Reals^d$, and the
distance between two points $p$ and $q$ is $|pq|^\alpha$, where $|pq|$
denotes the standard Euclidean distance between $p$ and $q$.
\begin{itemize}
\item The objective in problem \tsp($d,\alpha$) is to find a shortest
  tour (under distances $|\cdot|^\alpha$) that visits every city
  \emph{exactly} once.
\item In the closely related problem Rev-\tsp$(d,\alpha)$, the
  objective is to find a shortest tour that visits every city \emph{at
    least} once; thus the salesman is allowed to revisit
  cities.
\end{itemize}
Note that \tsp$(2,1)$ is the classical two-dimensional Euclidean \tsp\
and that \tsp$(d,\infty)$ is the so-called
\emph{bottleneck \tsp} in $\Reals^d$, where the goal is to find a tour
whose longest edge has minimum length.  We are, however, mainly
interested in the case where $\alpha$ is some small constant, and we
will not touch the case $\alpha=\infty$.

\paragraph{\em Similarities and differences to the classical Euclidean
  TSP}

The classical Euclidean {\tsp} is \np-hard even in two dimensions, but
it is relatively easy to approximate.  In particular, it admits a
polynomial-time approximation scheme: Given a parameter $\eps>0$ and a
set of $n$ points in $d$-dimensional Euclidean space, one can find in
$2^{(d/\eps)^{O(d)}} + (d/\eps)^{O(d)} n\log n$ time a tour whose
length is at most $1+\eps$ times the optimal length~\cite{rao-smith}.

A crucial property of the Euclidean {\tsp} is that the underlying
Euclidean distances satisfy the triangle inequality.  The triangle
inequality implies that no reasonable salesman would ever revisit the
same city: Instead of returning to a city, it is always cheaper to
skip the city and to travel directly to the successor city.  All
positive approximation results for the Euclidean {\tsp} rely heavily
on the triangle inequality.  In strong contrast to this, for exponents
$\alpha>1$ the distance function $|\cdot|^\alpha$ does not satisfy the
triangle inequality.  Thus the combinatorial structure of the problem
changes significantly---for example, revisits may suddenly become
helpful---and the existing approximation algorithms for Euclidean
{\tsp} cannot be applied.

Another nice property of the classical Euclidean problem \tsp$(2,1)$
is that, sloppily speaking, instances with many cities have long
optimal tours.  Consider for instance a set $P$ of $n$ points in the
unit square.  Then there exists a tour whose Euclidean length is
bounded by $O(\sqrt{n})$~\cite{KarpSteele}.  This bound is essentially
tight since there are point sets for which \emph{every} tour
has Euclidean length $\Omega(\sqrt{n})$.  Interestingly, these results
do not carry over to
\tsp$(2,2)$ with \emph{squared}
Euclidean distances.  Problem \#124 in the book by
Bollob\'as~\cite{Bollobas} shows that there always exists a tour for
$P$ such that the sum of the squared Euclidean distances is bounded
by~$4$, and that this bound of~$4$ is best possible.
Since, as a rule of
thumb, large objective values are easier to approximate than small
objective values, this already indicates a substantial difference in
the approximability behaviors of \tsp(2,1) and \tsp(2,2).

\paragraph{\em Previous work and our results}

Funke~et~al.~\cite{flnl-papwc-08} note that the distance function
$|\cdot|^\alpha$ satisfies the so-called \emph{$\tau$-relaxed triangle
  inequality} with parameter $\tau=2^{\alpha-1}$ (see
Section~\ref{sec:tspalpha} for a definition).  The classical TSP under
the $\tau$-relaxed triangle inequality has been extensively
studied~\cite{tsp_relaxed_improved, tsp_relaxed,
  tsp_relaxed_improved_further, bhksu-tnsah-02}, and all the corresponding
machinery from the literature can be applied directly to
\tsp($d,\alpha$).  For instance,
Andreae~\cite{tsp_relaxed_improved_further} derives a
$(\tau^2+\tau)$-approximation for the classical {\tsp} under the
$\tau$-relaxed triangle inequality ($\Delta_\tau$-\tsp, for short).
This result translates into a
$(4^{\alpha-1}+2^{\alpha-1})$-approximation for \tspalpha.  For
$\tau > 3$, it is better to apply Bender and
Chekuri's $4\tau$-approximation~\cite{tsp_relaxed_improved} for
$\Delta_\tau$-\tsp, which yields a
$2^{\alpha+1}$-approximation for \tspalpha.  Funke et al.\ derive a
$(2\cdot 3^{\alpha-1})$-approximation algorithm for \tspalpha, which
for the range $2<\alpha < \log_{3/2} 3 \approx 2.71$ is better than
applying the known
results~\cite{tsp_relaxed_improved_further,tsp_relaxed_improved}.
The best result for $\alpha<2$ is obtained by B\"ockenhauer et
al.~\cite{bhksu-tnsah-02} whose Christofides-based
$(3\tau^2/2)$-approximation for $\Delta_\tau$-\tsp\ yields a $(3 \cdot
2^{2\alpha-3})$-approximation for \tspalpha.

We will demonstrate in Section~\ref{sec:tspalpha} that essentially
\emph{every} variant of the original \tthree-algorithm by Andreae and
Bandelt~\cite{tsp_relaxed} already gives a $(2\cdot
3^{\alpha-1})$-approximation for \tsp$(d,\alpha)$.  The bottom-line of
all this, and the actual starting point of our paper, is that the
machinery around the $\tau$-relaxed triangle inequality only yields a
bound of roughly $2\cdot 3^{\alpha-1}$.  This raises the following
questions: How much can geometry help us in getting even better
approximation ratios?  Can we beat the 6-approximation for \tsp$(2,2)$
of Funke et al.?  We answer these questions affirmatively: We develop a
new variant of the \tthree-algorithm which we call the \emph{geometric
  \tthree-algorithm}.  An intricate analysis in Section~\ref{sec:two}
shows that this yields a 5-appro\-xi\-ma\-tion for \tsp$(2,2)$.  We
then extend our analysis to \tsp$(2,\alpha)$ with~$\alpha>2$, and thus
obtain a $(3^{\alpha-1}+\sqrt{6}^{\,\alpha}\!/3)$-approximation; see
Section~\ref{sec:greatertwo}.  This new bound is always better than
the bound $2\cdot 3^{\alpha-1}$ of Funke et al.\ and of our analysis
of the \tthree-algorithm.

Finally, in Section~\ref{sec:revtsp}, we turn our attention to the
following two questions: (a)~How does the approximability of \tsp\
behave when we make~$\alpha$ larger than one? (b)~Does allowing
revisits change the complexity or the approximability of the problem?
As we know, classical Euclidean \tsp\ (that is, \tsp$(d,1)$) is
\np-hard~\cite{papadimitriou1977ets} and has a polynomial-time
approximation scheme (PTAS) in any fixed number~$d$ of
dimensions~\cite{arora}.  On the other hand, Rev-\tsp$(d,\alpha)$
has---to the best of our knowledge---not been studied before.
Concerning question~(b), complexity behaves as expected:
Rev-\tsp$(d,\alpha)$ is NP-hard for any $d\ge 2$ and any \mbox{$\alpha>0$},
and our (straightforward) hardness argument also works for
\tsp$(d,\alpha)$.
In terms of approximability, we show that whereas the two-dimensional
problem Rev-\tsp$(2,\alpha)$ still has a PTAS for all values
$\alpha\ge2$, the problem becomes \apx-hard for all $\alpha>1$ in
three dimensions.  We were surprised that the \apx-hardness proof,
too, carried over to \tsp$(3,\alpha)$ for all $\alpha>1$.  This
inapproximability result stands in strong contrast to the behavior of
the classical Euclidean \tsp\ (the case $\alpha=1$).

\paragraph{\em The connection to wireless networks}

Consider a wireless network whose nodes are equipped with
omni-directional antennas.  The nodes are modeled as points in the
plane, and every node can communicate with all other nodes that are
within its transmission radius.  The power (that is, the energy)
needed to achieve a transmission radius of $r$ is roughly proportional
to $r^\alpha$ for some real parameter $\alpha$ called the
\emph{distance-power gradient}.  Depending on environmental
conditions, $\alpha$ typically is in the range
2~to~6~\cite[Chapter~1]{g-hawc-01}.  The goal is to assign powers to
the nodes such that the resulting network has certain desirable
properties, while the overall power consumption is minimized.  A
widely studied variant
has the objective to make the resulting network strongly
connected~\cite{symmetric_fork,fuchs_hardness,kirousis_mst}.  Other
variants (finding broadcast trees; having small hop diameter; etc)
have been studied as well.
Funke~et~al.~\cite{flnl-papwc-08} suggest that it is useful to have a
tour through the network, which can be used to pass a \emph{virtual
  token} around.  The resulting
power-assignment
problem is \tsp$(2,\alpha)$.

Another setting related to \tsp$(2,\alpha)$ is the following.  Instead
of omni-directional antennas, some wireless networks use directional
antennas.  This achieves the same transmission radius under a smaller
energy consumption~
\cite{nasipuri,Ramanathan01onthe}.
To model directional antennas, Caragiannis~et~al.~\cite{directional}
assume that a node can communicate with other nodes in a circular
sector of a given angle (where the sector's radius is still determined
by the power of the node's signal).  For directional antennas one not
only has to assign a power level to each node, but also has to decide
on the direction in which each node transmits.  If the opening angle
tends to zero and the points are in general position, a strongly
connected network becomes a tour.  Hence, our results on
\tsp($2,\alpha$) may shed some light on the difficulty of power
assignment for directional antennas with small opening angles.

\section{Approximating \tspalpha}
\label{sec:tspalpha}

In this section we lay the basis for our main contribution, a
5-approximation for \tsp$(2,2)$ in Section~\ref{sec:two}.  We review
known algorithms for a related version of \tsp, which can be applied
to our setting.  As it turns out, these algorithms already yield the
same worst-case bounds as the algorithm that Funke et
al.~\cite{flnl-papwc-08} gave recently.

We recall some definitions.
  Let $S$ be a set, let $\dist(\cdot,\cdot): S \times S \rightarrow
  \R_{\ge 0}$ be a distance function on $S$, and let $\tau\geq 1$. We
  say that~$\dist(\cdot,\cdot)$ fulfills the \emph{$\tau$-relaxed
    triangle inequality} if any three elements $p,q,r\in S$ satisfy
  $\dist(p,r) \leq \tau \cdot (\dist(p,q)+\dist(q,r))$.
Recall that we denote by $\Delta_\tau$-{\tsp} the {\tsp} problem on
complete graphs whose weight function (when viewed as a distance
function on the vertices) fulfills the $\tau$-relaxed triangle
inequality.  The following lemma, which has been observed by Funke et
al.~\cite{flnl-papwc-08}, allows us to apply algorithms for
$\Delta_\tau$-{\tsp} to our problem.  The proof relies on
H\"older's inequality.

\newcommand{\mylemmaone}{%
  Let $\alpha>0$ be a fixed constant.  The distance function $| \cdot
  |^\alpha: \R^d \times \R^d \rightarrow \R_{\ge 0}, (p,q) \mapsto
  |pq|^\alpha$ fulfills the $\tau$-relaxed triangle inequality for
  $\tau=2^{\alpha-1}$.
}
\begin{lemma}[\cite{flnl-papwc-08}]
  \label{lem:relaxed}
  \mylemmaone
\end{lemma}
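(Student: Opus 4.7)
The plan is to separate the two ingredients that appear in the statement: the geometry, encoded in the ordinary triangle inequality on $\R^d$, and the exponent $\alpha$, whose effect is a purely one-dimensional fact about the function $t\mapsto t^\alpha$ on $\R_{\ge 0}$.

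First I would apply the classical Euclidean triangle inequality $|pr|\le |pq|+|qr|$ and raise both sides to the $\alpha$-th power, which is monotone on $\R_{\ge 0}$. That immediately reduces the lemma to the one-variable claim
\[
  (a+b)^\alpha \;\le\; 2^{\alpha-1}\,(a^\alpha+b^\alpha) \qquad\text{for all } a,b\ge 0.
\]
I would then obtain this inequality by Hölder applied to the vectors $(a,b)$ and $(1,1)$ with conjugate exponents $\alpha$ and $\alpha/(\alpha-1)$, giving
\[
  a+b \;=\; a\cdot 1 + b\cdot 1 \;\le\; (a^\alpha+b^\alpha)^{1/\alpha}\cdot 2^{(\alpha-1)/\alpha},
\]
and raising to the $\alpha$-th power. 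Equivalently, one may invoke Jensen's inequality for the convex function $t\mapsto t^\alpha$ on $\R_{\ge 0}$ (valid for $\alpha\ge 1$), which yields $((a+b)/2)^\alpha\le (a^\alpha+b^\alpha)/2$, and then multiply by $2^\alpha$. Either route produces exactly the factor $\tau=2^{\alpha-1}$ claimed in the statement.

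There is no real obstacle here: the only subtlety is the range of $\alpha$. The algorithmic setting of the paper has $\alpha\ge 1$, so convexity of $t\mapsto t^\alpha$ (and hence Hölder/Jensen with the stated exponents) applies directly; for $\alpha<1$ the function is instead concave and subadditive, so the standard triangle inequality already gives the stronger bound $|pr|^\alpha\le |pq|^\alpha+|qr|^\alpha$, i.e.\ the $1$-relaxed triangle inequality, which is consistent with the claim. Combining the geometric step with the one-dimensional inequality finishes the proof, and the bound is tight when $p,q,r$ are collinear with $q$ the midpoint of $pr$ and $a=b$.
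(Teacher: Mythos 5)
Your proof is correct and follows the same route the paper indicates for this lemma: reduce via the Euclidean triangle inequality to the one-variable bound $(a+b)^\alpha\le 2^{\alpha-1}(a^\alpha+b^\alpha)$ and establish that by H\"older's inequality (the paper explicitly notes that the proof of Funke et al.\ relies on H\"older). The only loose point is the remark about $\alpha<1$, where $2^{\alpha-1}<1$ and the stated bound actually fails (take $q=r$), but this is a defect of the lemma's ``$\alpha>0$'' phrasing rather than of your argument, since the paper's definition requires $\tau\ge1$ and all uses have $\alpha\ge1$.
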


Andreae and Bandelt~\cite{tsp_relaxed} gave an approximation algorithm
for $\Delta_\tau$-\tsp. Their \tthree-algorithm is an adaptation of
the well-known double-spanning-tree heuristic for \tsp.  This
heuristic finds a minimum spanning tree (MST) in the given graph~$G$,
doubles all edges, finds an Euler tour in the resulting multigraph,
and finally constructs a Hamiltonian cycle from the Euler tour by
skipping all nodes that have already been visited.  The weight of the
MST is a lower bound for the length of a \tsp-tour since removing any
edge from a tour yields a spanning tree whose weight is at least the
weight of the MST.  Note that this statement holds for arbitrary
weight functions.  If the triangle inequality holds, the heuristic
yields a 2-approximation since then skipping over visited nodes never
increases the length of the tour, which initially equals twice the
weight of the MST.  For the weight function~$|\cdot|^\alpha$, however,
the heuristic can perform arbitrarily badly---consider a sequence of
$n$ equally-spaced points on a line.

The \tthree-algorithm of Andreae and Bandelt also creates a
Hamiltonian tour by shortcutting the MST, but their algorithm never
skips more than two consecutive nodes. It is never necessary to skip
more than two consecutive nodes because the cube~$T^3$ of a tree~$T$
is always Hamiltonian by a result of Sekanina~\cite{sekanina1960osv}.
Recall that the cube of a graph~$G$ contains an edge $uv$ if there is
a path from~$u$ to~$v$ in~$G$ that uses at most three edges. The proof
of Sekanina is constructive; Andreae and Bandelt use it to
construct a tour in $\text{MST}^3$.

The recursive procedure of Sekanina~\cite{sekanina1960osv} to obtain a
Hamiltonian cycle in~$T^3$ intuitively works as illustrated in
Fig.~\ref{fig:intuition_t3}; for the pseudo-code, see
Algorithm~\ref{alg:CycleInCube}.  The algorithm is applied to a
tree~$T$ and an edge~$e=u_1u_2$ of~$T$.  Removing the edge~$e$ splits
the tree into two components~$T_1$ and~$T_2$.  In each component~$T_i$
($i=1,2$), the algorithm selects an arbitrary edge $e_i=u_iw_i$
incident to~$u_i$ and recursively computes a Hamiltonian cycle
of~$T_i$ that includes the edge~$e_i$.  The algorithm returns a
Hamiltonian cycle of~$T$ that includes~$e$.  The cycle consists of the
cycles in~$T_1$ and~$T_2$ without the edges~$e_1$ and~$e_2$,
respectively.  The two resulting paths are stitched together with the
help of~$e$ and the new edge $w_1w_2$.

\begin{figure}
  \begin{minipage}{.31\textwidth}
    \centering
    \includegraphics{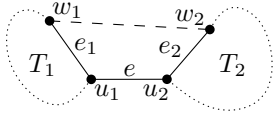}
    \caption{Recursively finding a Hamiltonian cycle in the cube of the
      tree~$T$.}
    \label{fig:intuition_t3}
  \end{minipage}
  \hfill
  \begin{minipage}{.64\textwidth}
    \begin{algorithm}[H]
      \For{$i \leftarrow 1$ \KwTo $2$}{%
        $T_i \leftarrow$ component of $T - e$ that contains $u_i$\;
        \lIf{$|T_i|=1$}{$P_i \leftarrow \emptyset$; $w_i \leftarrow u_i$\;}
        \Else{%
          pick an edge $e_i=u_iw_i$ incident to~$u_i$
          in~$T_i$\nllabel{lin:pick}\;
          \lIf{$|T_i|=2$}{$\Pi_i \leftarrow e_i$\;}
          \lElse{$\Pi_i \leftarrow$
            \textsc{CycleInCube}$(T_i, e_i) - e_i$}
        }
      }
      \Return{\normalfont $\Pi_1 + e + \Pi_2 + w_1w_2$}
      \caption{\textsc{CycleInCube}($T$, $e=u_1u_2$)}
      \label{alg:CycleInCube}
    \end{algorithm}
  \end{minipage}
\end{figure}

Note that different choices of the edge $e_i$ in line~\ref{lin:pick}
give rise to different versions of the algorithm. The standard
\tthree-algorithm takes an arbitrary such edge, while Andreae's
refined version~\cite{tsp_relaxed_improved} makes a specific choice,
which gives a better result.  (In the next section we will choose
$e_i$ based on the local geometry of the MST, which will lead to an
improved result for our problem.)  Andreae's tour in MST$^3$ has
weight at most $(\tau^2 + \tau)$ times the weight of the MST, which is
worst-case optimal~\cite{tsp_relaxed}.  Combining his result with
Lemma~\ref{lem:relaxed} yields that the refined \tthree-algorithm is a
$(4^{\alpha-1} + 2^{\alpha-1})$-approximation for \tspalpha.  We now
improve on this with the help of a simple argument.  We will
frequently use the following definition.
  Let $T$ be a tree and let $v_0, \dots ,v_k$ be a simple path in~$T$.
  Then we call $v_0 v_k$ a \emph{$k$-shortcut} of~$T$.  We say that a
  shortcut $vw$ \emph{uses} an edge~$e$ if~$e$ lies on the path
  connecting~$v$ and~$w$ in~$T$.
It is not hard to see that
the weight of a $k$-shortcut can be bounded as follows.
\newcommand{\mylemmatwo}{%
  Let $\alpha \ge 1$ and let~$e$ be a $k$-shortcut using edges
  $e_1, \dots, e_k$. Then
  $|e|^\alpha \leq k^{\alpha-1}\sum_{i=1}^{k} |e_i|^\alpha.$
}
\begin{lemma}
  \label{lem:nshortcut}
  \mylemmatwo
\end{lemma}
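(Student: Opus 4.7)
The plan is to combine the ordinary Euclidean triangle inequality with the convexity of $x \mapsto x^\alpha$ on $\R_{\ge 0}$ (i.e., the power-mean / Jensen inequality), which is essentially the same ingredient used to prove Lemma~\ref{lem:relaxed} via H\"older.

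First I would unpack the definition: if $e$ is a $k$-shortcut of~$T$ corresponding to the simple path $v_0, v_1, \dots, v_k$, then $e = v_0 v_k$ and the used edges are $e_i = v_{i-1}v_i$ for $i = 1, \dots, k$. Applying the ordinary (unweighted-exponent) triangle inequality in $\R^d$ along this path gives $|e| \le \sum_{i=1}^{k} |e_i|$. Since $\alpha \ge 1$ and $x \mapsto x^\alpha$ is monotone increasing on $\R_{\ge 0}$, raising both sides to the $\alpha$-th power preserves the inequality, yielding $|e|^\alpha \le \bigl(\sum_{i=1}^k |e_i|\bigr)^\alpha$.

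Next I would bound $\bigl(\sum_{i=1}^k |e_i|\bigr)^\alpha$. Writing it as $k^\alpha \bigl(\tfrac{1}{k}\sum_{i=1}^k |e_i|\bigr)^\alpha$ and applying Jensen's inequality to the convex function $x \mapsto x^\alpha$ gives
\[
\Bigl(\tfrac{1}{k}\sum_{i=1}^k |e_i|\Bigr)^\alpha \;\le\; \tfrac{1}{k}\sum_{i=1}^k |e_i|^\alpha,
\]
so $\bigl(\sum_{i=1}^k |e_i|\bigr)^\alpha \le k^{\alpha-1}\sum_{i=1}^k |e_i|^\alpha$. Chaining this with the previous inequality yields the claimed bound. (Equivalently, one could invoke H\"older's inequality with exponents $\alpha$ and $\alpha/(\alpha-1)$ applied to the vectors $(|e_1|, \dots, |e_k|)$ and $(1, \dots, 1)$, which is the same observation in different dress.)

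There is no real obstacle here; the only thing to be slightly careful about is the boundary case $\alpha = 1$, where $k^{\alpha-1} = 1$ and the statement reduces directly to the Euclidean triangle inequality, and the case $k=1$, where the shortcut coincides with the single edge $e_1$ and the bound is trivial.
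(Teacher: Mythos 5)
Your proof is correct and matches the argument the paper intends (the paper omits the proof, remarking only that it ``is not hard to see,'' and elsewhere points to H\"older's inequality as the underlying tool): the Euclidean triangle inequality along the path followed by convexity of $x\mapsto x^\alpha$ is exactly the expected route. Your remarks on the boundary cases $\alpha=1$ and $k=1$ are fine but not needed.
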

Given a tree~$T$, the tour constructed by the \tthree-algorithm
consists of edges of~$T$ and 2- and 3-shortcuts that use edges of~$T$.
Note that in this tour each edge of~$T$ is used exactly twice.  Thus,
for $\alpha \ge 2$, the original \tthree-algorithm does actually
better than the bound we obtained above for the refined
\tthree-algorithm.
\begin{corollary}
  \label{cor:t3}
  Every version of the \tthree-algorithm is a
  $(2\cdot3^{\alpha-1})$-approximation for \tspalpha.
\end{corollary}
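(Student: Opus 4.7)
The plan is to combine Lemma~\ref{lem:nshortcut} with the structural fact already stated just above the corollary: the tour produced by the \tthree-algorithm on an MST~$T$ is a concatenation of $k$-shortcuts of~$T$ with $k\in\{1,2,3\}$, and every edge of~$T$ is used \emph{exactly twice} across these shortcuts. Given this, the bound falls out of a short weight-accounting argument plus the standard MST lower bound.

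First, I would briefly justify the ``used exactly twice'' claim by induction on $|T|$ following the recursion of Algorithm~\ref{alg:CycleInCube}. The bridging edge~$e$ contributes one direct use (as an edge of the returned cycle) and one use through the new 3-shortcut $w_1w_2$, which traverses the path $e_1,e,e_2$. By the inductive hypothesis, each edge inside $T_i$ is used twice by \textsc{CycleInCube}$(T_i,e_i)$; the chosen edge~$e_i$ loses one use when it is removed from that recursive cycle to form $\Pi_i$, but regains it through the new shortcut $w_1w_2$, while every other edge of~$T_i$ keeps its count of two. The base cases $|T_i|\le 2$ are immediate from the pseudo-code.

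Next, I would apply Lemma~\ref{lem:nshortcut} to each individual shortcut in the returned tour. A $k$-shortcut with $k\le 3$ has $|\cdot|^\alpha$-weight at most $k^{\alpha-1}\le 3^{\alpha-1}$ times the sum of the $|\cdot|^\alpha$-weights of the tree edges it uses. Summing over all shortcuts in the tour and swapping the order of summation, together with the fact that each edge of~$T$ is charged by exactly two shortcuts, gives
\[
\weight(\mathrm{tour}) \;\le\; 3^{\alpha-1}\sum_{e\in T} 2\,|e|^\alpha \;=\; 2\cdot 3^{\alpha-1}\cdot\weight(T).
\]

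Finally, since $\weight(T)$ is a lower bound on the $|\cdot|^\alpha$-weight of any Hamiltonian cycle---removing any one edge of a tour yields a spanning tree, which holds for \emph{every} weight function---we conclude $\weight(\mathrm{tour})\le 2\cdot 3^{\alpha-1}\cdot\opt$, which is the claimed approximation ratio. The only mildly delicate step is the inductive bookkeeping confirming that every MST edge is used exactly twice; everything else is a direct combination of Lemma~\ref{lem:nshortcut} with the standard MST lower bound, and this analysis is oblivious to how the edge~$e_i$ is chosen in line~\ref{lin:pick}, so it applies to every version of the \tthree-algorithm.
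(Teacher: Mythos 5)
Your proof is correct and follows essentially the same route as the paper, which derives the corollary directly from the observation that the \tthree-tour consists of $(\le 3)$-shortcuts in which each MST edge is used exactly twice, combined with Lemma~\ref{lem:nshortcut} and the MST lower bound. The inductive bookkeeping you supply for the ``used exactly twice'' claim is a welcome elaboration of a fact the paper merely asserts.
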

Note that our improved analysis of the \tthree-algorithm yields the
same result as the algorithm of Funke et al.~\cite{flnl-papwc-08}.

Bender and Chekuri~\cite{tsp_relaxed_improved_further} designed a
$4\tau$-approximation for $\Delta_\tau$-{\tsp} using a different
lower bound: the optimal {\tsp} tour is a biconnected subgraph of the
original graph.  The weight of the optimal {\tsp} tour is at least
that of the minimum-weight biconnected subgraph.  The latter is
\np-hard to compute~\cite{et-ap-76}, but can be approximated within a
factor of~2~\cite{ps-iaaw2-97}.  Moreover, the square of a biconnected
subgraph is always Hamiltonian.  Thus using only edges of the
biconnected subgraph and \emph{two}-shortcuts yields a
$4\tau$-approximation for $\Delta_\tau$-\tsp.  Combining the result
of Bender and Chekuri with Lemma~\ref{lem:relaxed} immediately yields
the following result, which is better than Corollary~\ref{cor:t3} for
$\alpha> \log_{3/2} 3 \approx 2.71$.
\begin{corollary}
  \label{cor:bc}
  The algorithm of Bender and Chekuri is a
  $2^{\alpha+1}$-approximation for \tspalpha.
\end{corollary}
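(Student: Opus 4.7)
The plan is to combine two ingredients that have already been set up in the excerpt: Lemma~\ref{lem:relaxed} and the Bender--Chekuri $4\tau$-approximation for $\Delta_\tau$-\tsp. Concretely, I would first invoke Lemma~\ref{lem:relaxed} to assert that, for every fixed $\alpha \ge 1$, the distance function $|\cdot|^\alpha$ on $\R^d$ satisfies the $\tau$-relaxed triangle inequality with $\tau = 2^{\alpha-1}$. Thus any instance of \tspalpha\ can be viewed as an instance of $\Delta_\tau$-\tsp\ with this value of $\tau$, where the relevant complete graph is the one on the input point set with edge weights $|pq|^\alpha$.

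Next, I would apply the Bender--Chekuri algorithm to this $\Delta_\tau$-\tsp\ instance. Recall from the preceding paragraph in the excerpt that their algorithm first 2-approximates the minimum-weight biconnected subgraph (a lower bound on \opt) and then produces a Hamiltonian tour using only edges of that subgraph and 2-shortcuts; the $\tau$-relaxed triangle inequality is exactly what is needed to bound the cost of these shortcuts, yielding a $4\tau$-approximation. Since the output is a Hamiltonian tour, it is a feasible solution for \tspalpha\ as well, and its cost measured under $|\cdot|^\alpha$ is the same as its cost in the $\Delta_\tau$-\tsp\ instance. In particular, the cost is at most
\[
4\tau \cdot \opt = 4 \cdot 2^{\alpha-1} \cdot \opt = 2^{\alpha+1} \cdot \opt,
\]
which is the claimed bound.

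There is essentially no obstacle: the only thing one might worry about is whether Bender and Chekuri's guarantee is stated relative to the optimal $\Delta_\tau$-\tsp\ tour on the same weighted graph, which is exactly the optimal \tspalpha\ tour on the input point set, so the bound transfers verbatim. No geometric argument beyond Lemma~\ref{lem:relaxed} is required, and in particular the corollary makes no use of the specific structure of squared-Euclidean distances beyond the relaxed triangle inequality.
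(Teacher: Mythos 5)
Your proposal is correct and follows exactly the paper's argument: the corollary is obtained by combining Lemma~\ref{lem:relaxed} (which gives $\tau=2^{\alpha-1}$) with Bender and Chekuri's $4\tau$-approximation for $\Delta_\tau$-\tsp, yielding $4\cdot 2^{\alpha-1}=2^{\alpha+1}$. Your additional remarks on why the bound transfers (the Hamiltonian tour output and the identical optimal value) are sound and merely make explicit what the paper leaves implicit.
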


\section{A 5-Approximation for T{\small SP}(2,2)}
\label{sec:two}

In the previous section we have used graph-theoretic arguments to
determine the performance of the \tthree-algorithm.  By
Corollary~\ref{cor:t3}, the \tthree-algorithm yields a 6-approximation
for $\alpha=2$, independently of the dimension of the underlying
Euclidean space.  We now define what we call the \emph{geometric}
\tthree-algorithm and show that it yields a 5-approximation for
\tsp$(2,2)$.  The geometric \tthree-algorithm simply chooses in
line~\ref{lin:pick} of Algorithm~\ref{alg:CycleInCube} the edge $e_i$
that makes the smallest angle with the edge~$e$.

The idea behind taking advantage of geometry is as follows.  In
Corollary~\ref{cor:t3} we have exploited the fact that each edge is
used in two $(\le 3)$-shortcuts.  The weight of a 3-shortcut is
maximum if the corresponding points lie on a line.  For the case of
the Euclidean MST it is well-known that edges make an angle of at
least $\pi/3$ if they share an endpoint.  The same proof also works
for the MST \wrt~$|\cdot|^\alpha$.  This guarantees that in
line~\ref{lin:pick} of Algorithm~\ref{alg:CycleInCube}, we can pick an
edge~$e_i$ that makes a relatively small angle with~$e$---if the
degree of~$u_i$ is larger than~2.  Otherwise, it is easy to see
that~$e_i$ is used by a ($\le 2$)- and a ($\le 3$)-shortcut, which is
favorable to being used by two 3-shortcuts, see
Lemma~\ref{lem:nshortcut}.

Although the intuition behind our geometric \tthree-algorithm is clear,
its analysis turns out to be non-trivial.
We start with the following lemma that can be proved with some
elementary trigonometry.  Given two line segments $s$ and $t$ incident
to the same point, we denote the smaller angle between $s$ and $t$ by
$\angle st$ and define $\psi_{st} = \pi - \angle st$.

\begin{figure}[tb]
  \hfill
  \subfigure[$a$ and $c$ lie on the same side of the line through~$b$]
  {\qquad\includegraphics{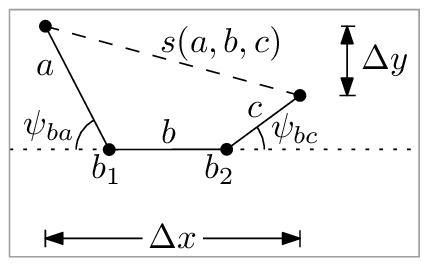}\qquad}
  \hfill
  \subfigure[$a$ and $c$ lie on different sides of the line through~$b$]
  {\qquad\includegraphics{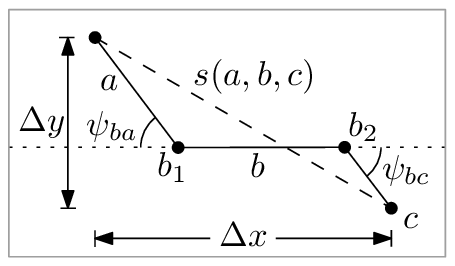}\qquad}
  \hfill\hspace*{0ex}
  \caption{Two cases for computing the length of the 3-shortcut $s(a,b,c)$.}
  \label{fig:cost3shortcut}
\end{figure}

\begin{lemma}
\label{lem:3-shortcut}
Given a tree~$T$, the 3-shortcut $s(a,b,c)$ that uses the edges~$a$, $b$, $c$
of~$T$ in this order has weight
\begin{eqnarray*}
|s(a,b,c)|^2 &=& |a|^2 + |b|^2 + |c|^2
+ 2|a||b|\cos\psi_{ba} + 2|b||c|\cos\psi_{bc}
+ 2|a||c|\cos(\psi_{ba} + \delta \cdot \psi_{bc}),
\end{eqnarray*}
where $\delta=+1$ if $a$ and $c$ lie on the same side of the line
through $b$, and $\delta=-1$ if $a$ and $c$ lie on opposite sides.
Moreover, $|s(a,b,c)|^2 \le 2|a|^2 + |b|^2 + 2|c|^2 +
2|a||b|\cos\psi_{ba} + 2|b||c|\cos\psi_{bc}$.
\end{lemma}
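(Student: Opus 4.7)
The plan is to compute $|s(a,b,c)|^2$ by expanding the squared norm of a sum of three vectors, and then to read off the dot products from the local geometry at the two shared endpoints on the path.

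First I would orient the path: write the three consecutive tree edges on the path between the endpoints of $s(a,b,c)$ as vectors $\vec{a},\vec{b},\vec{c}$ in the direction of traversal. Then $s(a,b,c)=\vec{a}+\vec{b}+\vec{c}$, so
\begin{equation*}
  |s(a,b,c)|^2 = |a|^2+|b|^2+|c|^2 + 2\,\vec{a}\!\cdot\!\vec{b} + 2\,\vec{b}\!\cdot\!\vec{c} + 2\,\vec{a}\!\cdot\!\vec{c}.
\end{equation*}
The two ``adjacent'' dot products are straightforward. At the shared endpoint of $a$ and $b$, the segment-angle $\angle ba$ is the angle between the two rays pointing \emph{outward} from that vertex, whereas $\vec{a}$ enters the vertex and $\vec{b}$ leaves it; hence the angle between the vectors themselves is $\pi-\angle ba = \psi_{ba}$, giving $2\vec{a}\!\cdot\!\vec{b}=2|a||b|\cos\psi_{ba}$. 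The same argument at the other shared endpoint yields $2\vec{b}\!\cdot\!\vec{c}=2|b||c|\cos\psi_{bc}$.

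The interesting term is $\vec{a}\!\cdot\!\vec{c}$, since $a$ and $c$ do not share an endpoint. I would set up coordinates placing $\vec{b}$ along the positive $x$-axis, with the shared endpoint of $a,b$ at the origin. Then the direction of $\vec{a}$, measured from the positive $x$-axis, is $\pm\psi_{ba}$ with the sign determined by on which side of the line through $b$ the fourth endpoint of $a$ lies (and symmetrically for $\vec{c}$ at the other end of $b$, with angle $\pm\psi_{bc}$). The angle between $\vec{a}$ and $\vec{c}$ is therefore $\psi_{ba}+\psi_{bc}$ when $a$ and $c$ lie on the same side of the line through $b$ (the two $\pm$'s have opposite signs so the magnitudes add), and $\psi_{ba}-\psi_{bc}$ when they lie on opposite sides. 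Packaging this into $\delta\in\{\pm1\}$ exactly as in the lemma, and using $\cos(-x)=\cos x$, we get $2\vec{a}\!\cdot\!\vec{c}=2|a||c|\cos(\psi_{ba}+\delta\psi_{bc})$, which combined with the above gives the claimed identity.

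For the second (inequality) part I would just upper-bound the mixed $a$-$c$ term: since $\cos(\cdot)\leq 1$ and $2|a||c|\leq |a|^2+|c|^2$ by AM--GM, we have $2|a||c|\cos(\psi_{ba}+\delta\psi_{bc})\leq |a|^2+|c|^2$. Substituting this back into the exact formula absorbs one extra $|a|^2$ and $|c|^2$ into the first three terms and drops the awkward mixed cosine, yielding precisely the stated upper bound. The main obstacle is the bookkeeping for the $\delta$ sign—making sure the case distinction on which side of the line through $b$ the endpoints $a$'s and $c$'s ``far'' vertices lie translates correctly into the sign of the angle each vector makes with $\vec{b}$—but once the coordinates are fixed this is mechanical.
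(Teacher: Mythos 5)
Your proof is correct and takes essentially the approach the paper intends: the paper gives no written proof, only the remark that the identity follows from elementary trigonometry together with the two-case figure for $\delta$, and your vector expansion of $|\vec{a}+\vec{b}+\vec{c}|^2$ with the sign bookkeeping for the polar angles of $\vec{a}$ and $\vec{c}$ relative to $\vec{b}$ is exactly that argument made precise. The final inequality via $\cos(\cdot)\le 1$ and $2|a||c|\le |a|^2+|c|^2$ is also the intended step.
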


Lemma~\ref{lem:3-shortcut} (illustrated in
Fig.~\ref{fig:cost3shortcut}) expresses the weight of a 3-shortcut in
terms of the lengths of the edges and the angles between them.  Now we
show that if an edge~$a$ is used in two 3-shortcuts, two of these
angles are related.  Note that the \tthree-algorithm generates the two
3-shortcuts that use~$a$ in two consecutive recursive calls, see
Fig.~\ref{fig:two3shortcuts_renumbered}.  The \tthree-algorithm is
first applied to edge~$b$ and then recursively to edge~$a$.  In the
recursive call, the shortcut $s(e,a,d)$ is generated where~$d$ is an
edge incident to both~$a$ and~$b$.  Then the algorithm returns from
the recursion and generates the 3-shortcut $s(a,b,c)$.  Thus~$a$ is
the middle edge in one 3-shortcut and the first or last edge in the
other 3-shortcut.
We rely on the following.

\newcommand{\mylemmafour}{%
  If the geometric \tthree-algorithm generates the two 3-shortcuts
  $s(a,b,c)$ and $s(e,a,d)$ in two recursive calls
  and~$d$ is incident to both~$a$ and~$b$, then $\psi_{ba} \ge (\pi -
  \psi_{ad})/2$.
}
\begin{lemma}
  \label{lem:related_angles}
  \mylemmafour
\end{lemma}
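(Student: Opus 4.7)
The plan is to reduce the claim to an inequality among the angles at the common endpoint of $a$, $b$, $d$, apply the greedy angular rule defining the geometric \tthree-algorithm, and finish with a short case split on the position of $d$ relative to $b$. In the $\angle$-notation, the desired bound $\psi_{ba}\ge(\pi-\psi_{ad})/2$ rewrites as $\angle ba\le\pi-\angle ad/2$.

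First I would observe that since $a$ and $b$ share one endpoint $u_1$ in the underlying tree and $d$ is incident to both, all three edges meet at $u_1$. Set up coordinates with $u_1$ at the origin, $b$ along the positive $x$-axis, and (after reflecting about the $x$-axis if necessary) $a$ in the closed upper half-plane. Because the outer recursive call had $b$ as its input edge and selected $a$ rather than $d$ in line~\ref{lin:pick} of Algorithm~\ref{alg:CycleInCube}, the greedy rule of the geometric \tthree-algorithm provides the single constraint $\angle ba\le\angle bd$.

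Next I would split on whether $d$ lies on the same side of the supporting line of $b$ as $a$ or on the opposite side. In the same-side case, $a$ lying between $b$ and $d$ by the algorithm constraint gives $\angle ba+\angle ad=\angle bd\le\pi$, which already implies the stronger bound $\angle ba\le\pi-\angle ad$. In the opposite-side case there are two sub-cases according to which of the two arcs between $a$ and $d$ is shorter. If $\angle ba+\angle bd\le\pi$, then $\angle ad=\angle ba+\angle bd\ge2\angle ba$ by the algorithm rule, and since $\angle ad\le\pi$ this yields $\angle ba\le\angle ad/2\le\pi-\angle ad/2$. If $\angle ba+\angle bd>\pi$, then $\angle ad=2\pi-\angle ba-\angle bd$, and the inequality $\angle ba\le\angle bd$ rearranges to $2\angle ba\le\angle ba+\angle bd=2\pi-\angle ad$, which is exactly $\angle ba\le\pi-\angle ad/2$.

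The main obstacle is the circular-angle bookkeeping: since $\angle$ always denotes the smaller of the two angles between two segments and lies in $[0,\pi]$, in the opposite-side configuration one must correctly decide when $\angle ad$ equals $\angle ba+\angle bd$ and when it equals $2\pi-\angle ba-\angle bd$, and the two sub-cases need slightly different bookkeeping. I do not expect the MST angular lower bound of $\pi/3$ at $u_1$ to enter anywhere; the lemma appears to follow purely from the algorithm's greedy choice combined with the elementary identities relating the three pairwise angles.
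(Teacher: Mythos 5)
Your proof is correct: the translation of the claim into $\angle ba\le\pi-\angle ad/2$, the single greedy constraint $\angle ba\le\angle bd$ (valid because $d$ is a candidate edge at the common vertex that the algorithm passed over in favor of $a$ when processing $b$, and because the tree structure forces $a$, $b$, $d$ to meet at one vertex), and all three angular cases check out --- in particular the sub-case $\angle ba+\angle bd>\pi$, where the bound is exactly the rearranged greedy inequality, is handled correctly. The paper states Lemma~\ref{lem:related_angles} without proof, but your argument uses precisely the ingredients the paper sets up for it (the greedy angular choice defining the geometric \tthree-algorithm plus elementary angle bookkeeping at the shared vertex), and your observation that the $\pi/3$ MST angle bound plays no role here is accurate.
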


Now we are ready to prove the main result of this section.

\begin{theorem}
  \label{thm:t3_5-app}
  The geometric \tthree-algorithm yields a 5-approximation for
  \tsp$(2,2)$.
\end{theorem}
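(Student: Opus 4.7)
The plan is to show that the tour $H$ produced by the geometric \tthree-algorithm satisfies $|H|^2 \le 5\,W(T)$, where $T$ is the minimum spanning tree under the weight function $|\cdot|^2$ and $W(T):=\sum_{e\in T}|e|^2$; since $W(T)\le\opt$, this yields the 5-approximation. I would write $|H|^2=\sum_{s\in\mathcal{S}}|s|^2$, where $\mathcal{S}$ is the multiset of $1$-, $2$-, and $3$-shortcuts that the algorithm assembles into the tour, and observe as in Corollary~\ref{cor:t3} that every MST edge appears in exactly two shortcuts of $\mathcal{S}$.  The key structural fact is that, for a generic internal edge $x$, these two uses have complementary roles: $x$ is the \emph{middle} of the $3$-shortcut $S_2(x)=s(d,x,d')$ generated in the recursive call on $x$ (with $d,d'$ the geometric picks at $x$'s two endpoints), and $x$ is an \emph{end} of the $3$-shortcut $S_1(x)=s(x,e,f)$ generated in the parent call.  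Boundary cases---the top-level edge and edges adjacent to leaves, where one of the two shortcuts is shorter than $3$---fit into the same framework with strictly smaller contributions and can be handled by Lemma~\ref{lem:nshortcut}.

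The next step is to apply the \emph{moreover} part of Lemma~\ref{lem:3-shortcut} to each $3$-shortcut, giving $|s(a,b,c)|^2\le 2|a|^2+|b|^2+2|c|^2+2|a||b|\cos\psi_{ba}+2|b||c|\cos\psi_{bc}$.  The quadratic terms aggregate neatly: every internal MST edge $x$ contributes $|x|^2$ once (as the middle of $S_2(x)$) and $2|x|^2$ once (as an end of $S_1(x)$), totalling $3|x|^2$, so their sum is at most $3\,W(T)$.  It remains to bound the cross terms---each of the form $2|x||y|\cos\psi_{xy}$ for MST edges $x,y$ sharing an endpoint and co-occurring in some shortcut---by $2\,W(T)$.

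To this end, I would group the cross terms at their shared endpoint $u$.  Let $a_1^u,a_2^u,\dots,a_k^u$ list the MST edges incident to $u$ in the order the algorithm processes them; by the geometric rule $a_{i+1}^u$ is the edge at $u$ of smallest angle to $a_i^u$ among those still present, and only the $k-1$ consecutive pairs $(a_i^u,a_{i+1}^u)$ contribute cross terms at $u$.  Writing $\alpha_i:=\angle(a_i^u,a_{i+1}^u)$, the Euclidean MST-angle bound gives $\alpha_i\ge\pi/3$, and Lemma~\ref{lem:related_angles} applied to two consecutive pairs yields $\alpha_{i-1}+\alpha_i/2\le\pi$ for $i=2,\dots,k-1$.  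The core of the proof is then the per-vertex inequality
\begin{equation*}
  \sum_{i=1}^{k-1}2\,|a_i^u|\,|a_{i+1}^u|\,\cos\psi_{a_i^u a_{i+1}^u} \;\le\; \sum_{i=1}^{k}|a_i^u|^2,
\end{equation*}
which, summed over all vertices (each MST edge appears at exactly two of them), delivers the desired $2\,W(T)$ bound on the total cross-term contribution.

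The main obstacle is proving this per-vertex inequality.  I would exploit the tension that Lemma~\ref{lem:related_angles} creates between neighbouring pair-angles: whenever some $\alpha_i$ is close to $\pi$---the regime where its cross term approaches its maximum $+2|a_i^u||a_{i+1}^u|$---the constraint $\alpha_{i-1}+\alpha_i/2\le\pi$ forces $\alpha_{i-1}\le\pi/2$, so the adjacent cross term is non-positive and there is enough slack in the $|a_i^u|^2$-budget to absorb the large term via AM-GM.  The extremal configuration is $\alpha_i\equiv 2\pi/3$ throughout the chain (the tightest common value compatible with Lemma~\ref{lem:related_angles} at equality), where each cross term equals $|a_i^u||a_{i+1}^u|$ and the inequality follows by a telescoping AM-GM calculation.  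Putting the two bounds together gives $|H|^2\le 3\,W(T)+2\,W(T)=5\,W(T)\le 5\,\opt$, which is the statement of Theorem~\ref{thm:t3_5-app}.
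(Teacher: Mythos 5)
Your global bookkeeping is sound and genuinely different from the paper's: the paper charges each MST edge $a$ separately, showing $\mathrm{contrib}(a)\le 5|a|^2$ by splitting every cross term between its two incident edges via Young's inequality with a case-dependent $\eps$, whereas you split the tour weight once and for all into a quadratic part (correctly bounded by $3\,W(T)$, since each internal edge is the middle of one $3$-shortcut and an end of another) and a cross part, which you then group per vertex. Up to that point everything checks out, and your identification of the cross terms at a vertex $v$ with the $\deg(v)-1$ consecutive pairs in processing order is correct.

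The gap is that the per-vertex inequality $\sum_{i=1}^{k-1}2|a_i||a_{i+1}|\cos\psi_i\le\sum_{i=1}^{k}|a_i|^2$ is exactly where all the difficulty of the theorem lives, and you have not proved it. Your plan is to verify it at the ``extremal configuration'' $\alpha_i\equiv 2\pi/3$, but that configuration is not extremal: already for $k=3$ it gives cross sum $|a_1||a_2|+|a_2||a_3|\le\sum_i|a_i|^2-\tfrac12\bigl(|a_1|^2+|a_3|^2\bigr)$, i.e.\ strict slack, whereas the feasible configuration $\alpha_1=\pi/2$, $\alpha_2=\pi$ (which satisfies $\alpha_i\ge\pi/3$, the selection rule, and $\alpha_1+\alpha_2/2\le\pi$ from Lemma~\ref{lem:related_angles}) yields a cross sum of $2|a_2||a_3|$, which attains $\sum_i|a_i|^2$ in the limit $|a_1|\to 0$, $|a_2|=|a_3|$. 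So checking the all-$2\pi/3$ case establishes nothing, and the ``tension/absorption'' heuristic for angles near $\pi$ is never quantified. Worse, you never establish the geometric fact that the paper's Case~III supplies, namely that three consecutive processing-order angles at $v$ cannot all exceed $\pi/2$; without it a vertex could in principle generate a run of three or more positive cross terms, and the constraints you do invoke ($\alpha_i\ge\pi/3$ and Lemma~\ref{lem:related_angles}) do not obviously keep the associated tridiagonal quadratic form below $\sum_i|a_i|^2$ on the positive orthant. To close the argument you must either prove the per-vertex inequality rigorously using all three ingredients (the $\pi/3$ bound, Lemma~\ref{lem:related_angles}, and the Case~III exclusion), or revert to the paper's per-edge charging, where Young's inequality with $\eps=1/\cos\psi_i$ for the first angle of a run and $\eps=1$ thereafter does this work explicitly.
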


\begin{proof}
  We express the length of each shortcut~$s$ of the \tthree-tour in
  terms of the lengths of the MST edges that~$s$ uses.  Changing the
  perspective, for each MST edge~$a$, we use contrib$(a)$ to denote
  the sum of all terms that contain the factor~$|a|$.  The edge~$a$ is
  used in at most two shortcuts.  Bounding their lengths yields an
  upper bound on contrib$(a)$.  The sum of all contributions relates
  the length of the \tthree-tour to that of the MST
  (\wrt~$|\cdot|^\alpha$), which in turn is a lower bound for the
  length of an optimal \tsp\ tour.

  Due to Lemma~\ref{lem:nshortcut}, contrib$(a) \le 5|a|^2$ if~$a$
  is used in a ($\le$2)-shortcut on one side and a ($\le$3)-shortcut
  on the other side.  So we focus on the case that~$a$ is used in two
  3-shortcuts, see Fig.~\ref{fig:two3shortcuts_renumbered}.
  \begin{figure}[b]
  \begin{minipage}[b]{.48\textwidth}
    \centering
    \includegraphics{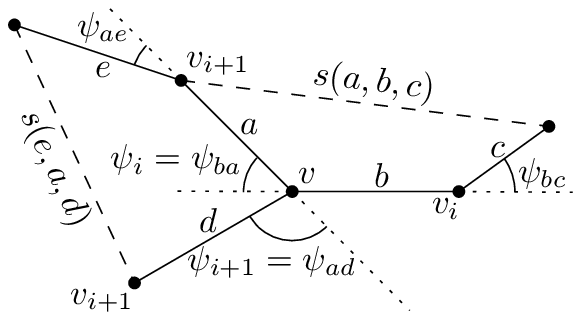}
    \caption{Two 3-shortcuts that use edge~$a$.}
    \label{fig:two3shortcuts_renumbered}
  \end{minipage}
  \hfill
  \begin{minipage}[b]{.4\textwidth}
    \centering
    \includegraphics{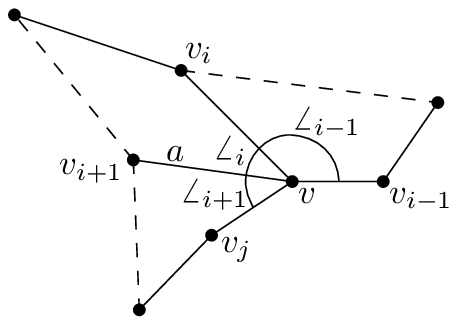}
    \caption{Illustration of case III.}
    \label{fig:three3shortcuts}
  \end{minipage}
  \end{figure}
  We rewrite the composite terms in the bound for $s(a,b,c)$ in
  Lemma~\ref{lem:3-shortcut} using Young's inequality with~$\eps$,
  which, given $x,y \in \R$ and $\eps >0$, states that $xy \le
  x^2/(2\eps) + y^2\eps/2$.

  Let~$v$ be the vertex that is incident to edges~$a$ and~$b$. If
  there are multiple 3-shortcuts that use edges that are incident
  to~$v$ then the \tthree-algorithm generates these in consecutive
  recursive calls. We renumber the edges incident to~$v$ such that the
  algorithm is first applied to $vv_1$, then recursively to $vv_2$
  etc.  Then there is some $i \ge 1$ such that $b = vv_i$ and
  $a=vv_{i+1}$ because the algorithm is first applied to~$b$ and then
  recursively to~$a$.  We define $\psi_i = \psi_{vv_i,vv_{i+1}} ( =
  \psi_{ba})$.  We rewrite the term $2|a||b|\cos\psi_{ba}$ in the
  bound for $|s(a,b,c)|^2$ in Lemma~\ref{lem:3-shortcut} as follows.
  \begin{align}
    &2|a||b|\cos\psi_{ba} = 2|vv_i||vv_{i+1}|\cos\psi_i \le
    f(|vv_{i+1}|, |vv_{i}|, \psi_i), \label{ineq:crossterms}\\
    & \text{where} \notag \\[-2ex]
    & f(|vv_{i+1}|, |vv_i|, \psi_i) =
    \begin{cases}
      0 &\text{if $\psi_i \ge \frac{\pi}{2}$,}\\
      |vv_i|^2 + |vv_{i+1}|^2\cos^2\psi_i &
      \text{if $\psi_i < \frac{\pi}{2}$ and}\\
      &\text{$\left(i=1\text{ or }\left(i>1\text{ and }\psi_{i-1} \ge
            \frac{\pi}{2}\right)\right)$,}\\
      \left(|vv_i|^2 + |vv_{i+1}|^2\right)\cos\psi_i &\text{if $\psi_i
        < \frac{\pi}{2}$ and $i>1$ and $\psi_{i-1} <
        \frac{\pi}{2}$.}\notag\\
    \end{cases}
  \end{align}
  The second case of inequality~(\ref{ineq:crossterms}) follows from
  Young's inequality with $\eps=1/\cos\psi_i$ and the third case from
  Young's inequality with $\eps=1$.  Replacing $2|b||c|\cos\psi_{bc}$
  in the bound for $|s(a,b,c)|^2$ in Lemma~\ref{lem:3-shortcut} is
  analogous.  Together, the two replacements yield the bound
  \begin{equation}
    |s(a,b,c)|^2 \le 2|a|^2 + |b|^2 + 2|c|^2 + f(|a|,|b|,\psi_{ba}) +
    f(|c|,|b|,\psi_{bc}). \label{ineq:abc_with_f}
  \end{equation}
  We use~(\ref{ineq:abc_with_f}) to bound the weights of all
  3-shortcuts.  The weight of the final tour is the sum of the weights
  of all shortcuts.  In this sum we can take the two occurrences of an
  edge $a = vv_{i+1}$ together and analyze the contribution of $a$ to
  the tour.  Note that the result of~(\ref{ineq:abc_with_f}) is still
  at most $3(|a|^2+|b|^2+|c|^2)$.  So if an edge $a$ is used in a
  ($\le$3)-shortcut on one side and a ($\le2$)-shortcut on the other
  side, then we still have that contrib$(a) \le 5|a|^2$.  It remains
  to consider the case that~$a$ is used in two 3-shortcuts.  Let
  $s(a,b,c)$ and $s(e,a,d)$ be these 3-shortcuts.  The algorithm is
  first applied to edge~$b$ and generates shortcut $s(a,b,c)$,
  where~$a$ is the first or the third edge of the shortcut.  Then the
  algorithm is recursively applied to edge~$a$ and generates shortcut
  $s(e,a,d)$, where~$a$ is the middle edge.
  Fig.~\ref{fig:two3shortcuts_renumbered} shows how the vertices are
  numbered in this case.

  Let $\sigma_a$ be a function that takes a sum of terms and returns the
  sum of all terms that contain~$|a|$.  We derive the following
  expression for contrib$(a)$.
  \begin{align}
    \text{contrib}(a) &= \sigma_a(\weight(s(a,b,c))) +
    \sigma_a(\weight(s(e,a,d)))\notag\\
    &\le \sigma_a\left(2|a|^2 + |b|^2 + 2|c|^2 + f(|a|,|b|,\psi_{ba}) +
      f(|c|,|b|,\psi_{bc})\right)\notag\\
    &\phantom{+} + \sigma_a\left(2|e|^2 + |a|^2 + 2|d|^2 +
      f(|e|,|a|,\psi_{ae}) + f(|d|,|a|,\psi_{ad})\right)\notag\\
    &\le 4|a|^2 + \sigma_a(f(|vv_{i+1}|,|vv_{i}|,\psi_i)) +
    \sigma_a(f(|vv_{i+2}|,|vv_{i+1}|,\psi_{i+1}))\label{ineq:contriba}
  \end{align}
  By definition of~$f$ we have to consider three cases in
  (\ref{ineq:contriba}) for contrib$(a)$.   \smallskip

  \noindent
  \textbf{Case I:} $\psi_i \ge \pi/2$ or $\psi_{i+1} \ge \pi/2$.

  We assume w.l.o.g.\ 
  that $\psi_i \ge \pi/2$.  Then
  we know that $f(|vv_{i+1}|,|vv_{i}|,\psi_i)) = 0$ and in the worst
  case $\sigma_a(f(|vv_{i+2}|,|vv_{i+1}|,\psi_{i+1})) \le |a|^2$.  Thus
  we have that contrib$(a) \le 5|a|^2.$ \smallskip

  \noindent
  \textbf{Case II:} $\psi_i < \pi/2$ and $\psi_{i+1} < \pi/2$ and ($i
  = 1$ or ($i > 1$ and $\psi_{i-1} \ge \pi/2$)).

  By definition of~$f$ we have:
\[
  \begin{array}{@{}lcccl@{}}
    \sigma_a(f(|vv_{i+1}|,|vv_{i}|,\psi_i)) &=& \sigma_a\left(|vv_i|^2 +
      |vv_{i+1}|^2\cos^2\psi_i\right) &=&
    |a|^2\cos^2\psi_i\\[.5ex]

    \sigma_a(f(|vv_{i+2}|,|vv_{i+1}|,\psi_{i+1})) &=&
    \sigma_a\left((|vv_{i+1}|^2 + |vv_{i+2}|^2)\cos\psi_{i+1}\right) &=&
    |a|^2\cos\psi_{i+1}
  \end{array}
\]
  Lemma~\ref{lem:related_angles} states that $\psi_i \ge
  (\pi-\psi_{i+1})/2$.  We also know that $\psi_i \le \pi$ by
  definition.  Thus we have
  \[\text{contrib}(a)\le\Big(4 +\cos^2\frac{\pi-\psi_{i+1}}{2} +
    \cos\psi_{i+1}\Big)|a|^2 \; \le \; 5|a|^2.\]

  \noindent\textbf{Case III:} $\psi_i < \pi/2$ and $\psi_{i+1} <
  \pi/2$ and $i>1$ and $\psi_{i-1} < \pi/2$.

  It can be shown that this leads to a contradiction, see
  Fig.~\ref{fig:three3shortcuts} (on
  page~\pageref{fig:three3shortcuts}).
\smallskip

  In cases~I and~II, the contribution of any edge~$|a|$ to the tour is
  at most~$5|a|^2$.  The theorem follows by summing up the
  contributions of all edges.
\end{proof}

When using the MST as a lower bound in the analysis, there is not much
room for improvement.  There are instances of \tsp(2,2) where the
\tthree-algorithm yields a tour whose weight is $4\frac{4}{11}$ times
that of the MST; see also
\cite[Theorem~4.19]{nijnatten}.

\section{Approximating \tsp$(2,\alpha)$ with $\alpha \ge 2$}
\label{sec:greatertwo}

In this section we generalize the main result of the previous section
to $\alpha \ge 2$.
Our new bound is always better than the bound $2 \cdot
3^{\alpha -1}$ of Funke et al.~\cite{flnl-papwc-08},
see also Corollary~\ref{cor:t3}.  For $\alpha < 3.41$ our bound is better than
the bound $2^{\alpha+1}$ that follows from the algorithm of Bender and
Chekuri \cite{tsp_relaxed_improved_further}, see
Corollary~\ref{cor:bc}.

\newcommand{\myobs}{%
  The function $h:[0,2\pi] \rightarrow \R, x \mapsto (2+\cos x)^k +
  (2 + \sin^2 x/2)^k$ attains its maximum value at $x=0$.  }

\begin{theorem}
  \label{thm:t3_alpha}
  The geometric T$\,^3$-algorithm yields a $(3^{\alpha-1} +
  \sqrt{6}^{\,\alpha}\!/3)$-appro\-xi\-ma\-tion for \tsp$(2,\alpha)$ if
  $\alpha \ge 2$.
\end{theorem}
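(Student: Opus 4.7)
My plan is to follow the amortized-contribution framework of Theorem~\ref{thm:t3_5-app}: for every edge~$a$ of the MST (w.r.t.~$|\cdot|^{\alpha}$) I would upper-bound the amount that~$a$ contributes to the two shortcuts in which it is used, aiming at $\text{contrib}(a)\le(3^{\alpha-1}+\sqrt{6}^{\,\alpha}/3)|a|^{\alpha}$. Summing over all MST edges and comparing to the MST-lower-bound on $\opt$ would then give the theorem.

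First I would bound each 2-shortcut by Lemma~\ref{lem:nshortcut} and each 3-shortcut by the second inequality of Lemma~\ref{lem:3-shortcut}. Since we now work with $|s|^{\alpha}$ rather than $|s|^{2}$, the purely additive Young-based bookkeeping of the $\alpha=2$ proof is no longer sufficient. To separate out the $|a|^{\alpha}$-part of each shortcut bound I would combine Young's inequality applied to the cross-terms $2|a||b|\cos\psi_{ba}$ with a power-mean step of the form $(x_{1}+x_{2}+x_{3})^{\alpha/2}\le 3^{\alpha/2-1}\sum_{i}x_{i}^{\alpha/2}$; the resulting per-edge coefficients remain tight in the collinear, equal-length configuration, which is exactly where the constants $3^{\alpha-1}$ and $\sqrt{6}^{\,\alpha}/3$ come from.

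I would then case-split on how $a$ is used. If $a$ lies in a $(\le 2)$-shortcut and a $(\le 3)$-shortcut, Lemma~\ref{lem:nshortcut} alone gives $\text{contrib}(a)\le(2^{\alpha-1}+3^{\alpha-1})|a|^{\alpha}$; a direct calculation verifies $2^{\alpha-1}\le\sqrt{6}^{\,\alpha}/3$ for all $\alpha\ge2$, so the target bound holds. The dominant case is when $a$ is used in two 3-shortcuts, namely as the end edge of some $s(a,b,c)$ and as the middle edge of some $s(e,a,d)$ (cf.\ Fig.~\ref{fig:two3shortcuts_renumbered}). In this case Lemma~\ref{lem:related_angles} couples the two shortcuts via $\psi_{ba}\ge(\pi-\psi_{ad})/2$. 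I would then reduce $\text{contrib}(a)$ to a function of $\psi_{ad}$ alone, first by worst-casing over the neighbouring edge lengths $|b|,|c|,|d|,|e|$ (a convexity/scaling argument should pin the maximum at $|b|=|c|=|d|=|e|=|a|$) and then by using Lemma~\ref{lem:related_angles} to eliminate $\psi_{ba}$. The resulting expression takes the shape $3^{\alpha/2-1}\bigl[(2+\cos\psi_{ad})^{\alpha/2}+(2+\sin^{2}(\psi_{ad}/2))^{\alpha/2}\bigr]|a|^{\alpha}$, i.e., exactly the function~$h$ from the stated observation with $k=\alpha/2$. The observation then pins the worst case at $\psi_{ad}=0$, where the value evaluates to $3^{\alpha/2-1}(3^{\alpha/2}+2^{\alpha/2})=3^{\alpha-1}+\sqrt{6}^{\,\alpha}/3$, as required.

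The main obstacle is controlling the non-linear behaviour of $|s|^{\alpha}$ for general $\alpha\ge2$: the clean additive bookkeeping of the $\alpha=2$ proof no longer works, and Young's inequality has to be combined with a power-mean step in a way that keeps the angle information while making sure the slack it introduces is absorbed exactly by the straight-line worst case. Once the expression has been brought into the form of~$h$, the stated observation together with some calculus finishes the argument.
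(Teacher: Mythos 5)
Your proposal matches the paper's own proof in all essentials: the per-edge contribution bookkeeping, the Young-plus-H\"older step with constant $3^{\alpha/2-1}$ to isolate the $|a|^{\alpha}$-coefficient, the angle coupling via Lemma~\ref{lem:related_angles}, and the reduction to the function $h$ with worst case at $x=0$ giving $3^{\alpha/2-1}(3^{\alpha/2}+2^{\alpha/2})=3^{\alpha-1}+\sqrt{6}^{\,\alpha}\!/3$. The only cosmetic difference is that the paper never needs to worst-case over the neighbouring edge lengths (the $\sigma_a$-extraction after H\"older already removes them) and it retains the explicit Case~I/II/III split from Theorem~\ref{thm:t3_5-app}, which your sketch compresses.
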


\begin{proof}
  If an edge~$a$ is used in a ($\leq$2)-shortcut on one side and a
  ($\leq$3)-shortcut on the other side then the total contribution of
  $a$ to the tour is at most $(2^{\alpha-1} + 3^{\alpha-1})|a|^\alpha$
  by Lemma~\ref{lem:nshortcut}. So we will focus our analysis again on
  the case that~$a$ is used in two 3-shortcuts. For $\alpha=2$ we can
  express the weight of a 3-shortcut by Lemma~\ref{lem:3-shortcut} and
  rewrite the composite terms as in
  inequality~(\ref{ineq:crossterms}).  For $\alpha>2$ we apply
  H\"{o}lder's inequality.
  \begin{align}
    |s(a,b,c)|^\alpha &= \left(|s(a,b,c)|^{2}\right)^{\alpha/2}\notag\\
    &\leq \left(2|a|^2 + |b|^2 + 2|c|^2 + f(|a|,|b|, \psi_{ba}) +
      f(|c|, |b|, \psi_{bc})\right)^{\alpha/2}\notag\\
    &= \left(\beta_a |a|^2 + \beta_b |b|^2 + \beta_c |c|^2
    \right)^{\alpha/2}\label{ineq:beta}\\
    &\leq 3^{\alpha/2-1}\left(\beta_a^{\alpha/2} |a|^\alpha +
      \beta_b^{\alpha/2} |b|^\alpha + \beta_c^{\alpha/2} |c|^\alpha
    \right)\\[-4ex]\notag
  \end{align}
  We introduced the constants of type $\beta$ to shorten the
  expression.  
  Note that the last inequality holds only if $\alpha>2$.

  In order to bound the contribution of an edge $a$ that is used in
  two 3-shortcuts we follow the proof of Theorem~\ref{thm:t3_5-app}.
  Since the assumptions of case~III in that proof led to a
  contradiction, it suffices to consider cases~I and~II.  \smallskip

  \noindent
  \textbf{Case I}: $\psi_i \geq \pi/2$ or $\psi_{i+1} \geq \pi/2$.
  \savespace{1}
  \begin{align*}
    \text{contrib}(a) &\leq 3^{\alpha/2-1}\left(\left(2+\cos
        \psi_i\right)^{\alpha/2} + \left(2+\cos
        \psi_{i+1}\right)^{\alpha/2}\right)|a|^\alpha\\
    &\leq3^{\alpha/2-1}\left(2^{\alpha/2} +3^{\alpha/2}\right)|a|^\alpha
    \,=\, \left(3^{\alpha-1} + \sqrt{6}^{\,\alpha}\!/3\right)|a|^\alpha
  \end{align*}

  \noindent
  \textbf{Case II}: $\psi_i < \pi/2$ and $\psi_{i+1} < \pi/2$ and ($i
  = 1$ or ($i > 1$ and $\psi_{i-1} \geq \pi/2$)).
  \savespace{1}
  \begin{align*}
    \text{contrib}(a) &\leq
    3^{\alpha/2-1}\underbrace{\left(\left(2+\cos\psi_{i+1}\right)^{\alpha/2}
        + \left(2+\sin^2 \psi_{i+1}/2\right)^{\alpha/2}\right)
    }_{\mbox{\normalsize $g_\alpha(\psi_{i+1})$}} |a|^\alpha
  \end{align*}
  Now we use the
  fact that the function $h:[0,2\pi] \rightarrow \R, x \mapsto (2+\cos
  x)^k + (2 + \sin^2 x/2)^k$ attains its maximum value at $x=0$.
  Thus $g_\alpha$ also attains its maximum in the range $[0,\pi/2)$ in
  $x=0$.  This yields
  \[\text{contrib}(a) \;\le\; 3^{\alpha/2-1} \cdot g_\alpha(0) \cdot
  |a|^\alpha \;\le\; \big(3^{\alpha-1} +
  \sqrt{6}^{\,\alpha}\!/3\big)|a|^\alpha. \] %
  In both cases we showed that contrib$(a) \le (3^{\alpha-1} +
  \sqrt{6}^{\,\alpha}\!/3)|a|^\alpha$. The theorem follows for
  $\alpha>2$ by summing up the contributions of all edges. The case
  $\alpha=2$ corresponds to Theorem~\ref{thm:t3_5-app}.
\end{proof}

\section{The Approximability of T{\small SP} and Rev-T{\small SP}}
\label{sec:revtsp}

In this section we discuss complexity and approximability of \tsp\ and
its variant Rev-\tsp, where the salesman is allowed to revisit the
cities.  Recall that for any fixed dimension $d\ge2$, \tsp$(d,1)$ is
\np-hard~\cite{papadimitriou1977ets} and admits a PTAS~\cite{arora}.

\begin{theorem}
  \tsp$(d,\alpha)$ and Rev-\tsp$(d,\alpha)$ are {\np}-hard for any
  $d\ge2$ and $\alpha>0$.
\end{theorem}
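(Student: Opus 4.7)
The plan is a direct reduction from \textsc{Hamiltonian Cycle in Grid Graphs}, which is NP-hard by a classical result of Itai, Papadimitriou, and Szwarcfiter. A grid graph has as its vertices a finite set $V$ of points from the integer lattice $\mathbb{Z}^2$, and two vertices are connected by an edge if and only if their Euclidean distance equals~$1$. Given such an instance $G=(V,E)$ with $|V|=n$, I embed $V$ into $\mathbb{R}^d$ by mapping each lattice point $(x,y)$ to $(x,y,0,\dots,0)\in\mathbb{R}^d$; this handles any $d\ge 2$ uniformly. The reduction is clearly polynomial, so it suffices to show that both \tsp$(d,\alpha)$ and Rev-\tsp$(d,\alpha)$ on this point set have optimum value at most $n$ if and only if $G$ admits a Hamiltonian cycle.

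The forward direction is trivial: a Hamiltonian cycle of $G$ uses $n$ edges, each of Euclidean length $1$, so its total cost under $|\cdot|^{\alpha}$ equals $n\cdot 1^\alpha = n$; this is a valid \tsp-tour and hence also a valid Rev-\tsp-tour. For the converse, the key observation is that any two distinct lattice points lie at Euclidean distance at least $1$, and two lattice points lie at distance exactly~$1$ if and only if they are adjacent in~$G$. Thus every edge of any tour (whether in \tsp\ or in Rev-\tsp) contributes weight at least $1^\alpha=1$ to the total cost.

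Consider an arbitrary closed walk $W$ that visits every point of $V$. Because $W$ must visit each of the $n$ vertices and return to its starting point, $W$ contains at least $n$ edges, so $\weight(W)\ge n$. Equality forces $W$ to have exactly $n$ edges, and moreover every such edge must have length exactly $1$, i.e.\ must correspond to an edge of~$G$. A closed walk on $n$ vertices with exactly $n$ edges that visits every vertex is necessarily a simple cycle through all vertices, so it is a Hamiltonian cycle of~$G$. Furthermore, any \tsp-tour (which by definition has exactly $n$ edges) that is not a Hamiltonian cycle of $G$ must use at least one pair of non-adjacent lattice points, contributing at least $(\sqrt{2})^{\alpha}=2^{\alpha/2}>1$ for that edge and hence yielding total cost strictly greater than $n$. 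This establishes the equivalence for both problem variants simultaneously.

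I do not anticipate a serious obstacle in this argument; the only mild subtlety is that one must treat the \tsp\ and Rev-\tsp\ cases together, making sure that allowing revisits cannot produce a tour of cost $\le n$ other than by realizing a Hamiltonian cycle. This is handled by the single lower bound $\weight(W)\ge n$ for every closed walk covering $V$, which holds for all $\alpha>0$ because grid-point distances are at least~$1$ and $1^\alpha=1$. Since Hamiltonian cycle in grid graphs is NP-hard, the theorem follows.
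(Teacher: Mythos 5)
Your proof is correct and follows essentially the same route as the paper: both reduce from the Itai--Papadimitriou--Szwarcfiter result on Hamiltonian cycles in grid graphs and exploit the gap between an optimum of $n$ (all unit-length steps) and an optimum of at least $n-1+\sqrt{2}^{\,\alpha}$, with your closed-walk lower bound handling the Rev-\tsp\ case just as the paper intends. You merely spell out in more detail what the paper states in two sentences.
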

\begin{proof}
  Itai et al.~\cite{ItaiPS1982} showed that, given $n$ points in the
  unit grid, it is \np-hard to decide whether there is a {\tsp} tour
  of Euclidean length $n$.  Thus for both of our problems it is
  \np-hard to distinguish between $\opt=n$ and $\opt\ge
  n-1+\sqrt{2}^{\,\alpha}$.
\end{proof}

\begin{theorem}
  \tsp$(d,\alpha)$ and Rev-\tsp$(d,\alpha)$ are \apx-hard for any
  $d\ge3$ and any $\alpha>1$.
\end{theorem}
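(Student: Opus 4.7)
The plan is to establish the hardness by a gap-preserving reduction from a known \apx-hard combinatorial variant of TSP. I would start from Min-\tsp$(1,2)$ on $3$-regular graphs, which is \apx-hard: here one is given a cubic graph $G=(V,E)$ with $|V|=n$ and must find a minimum-weight Hamilton cycle in the complete graph on~$V$, where edges of $G$ cost $1$ and non-edges cost $2$. Its \apx-hardness supplies a constant $\delta_0>0$ for which it is \np-hard to distinguish instances with $\opt\le n$ (i.e.\ $G$ is Hamiltonian) from instances with $\opt\ge(1+\delta_0)n$.

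From such a $G$ I would build a point set $P\subset\R^3$ whose geometric TSP cost under $|\cdot|^\alpha$ encodes the combinatorial optimum of the source instance. Concretely, I would place the $n$ vertices of $G$ at designated positions in $\R^3$ and realize each edge $uv\in E$ as a polygonal curve from $p_u$ to $p_v$ of common Euclidean length, routed so that different edge-curves are pairwise well separated. Along each curve I would distribute $m$ evenly-spaced \emph{subdivision points} forming a chain of unit segments after scaling. Since $\alpha>1$, traversing such a chain of $m$ unit segments costs $m$, whereas any jump across $k\ge 2$ consecutive chain points costs $k^\alpha>k$, which is super-linear; by choosing $m$ large enough, any optimal tour must traverse each chain as a contiguous block without shortcut. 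Optimal tours on $P$ then correspond to Hamilton cycles of the complete graph on $V$: traversing the chain for $uv\in E$ has normalized cost $1$, while jumping directly between two non-adjacent vertices costs at least some constant $c=c(\alpha)>1$ arising from the minimum non-edge Euclidean distance in the embedding. A YES-instance thus yields a geometric tour of normalized cost at most~$n$, whereas any NO-instance tour uses at least $\delta_0 n$ non-edges and hence has cost at least $n+\delta_0 n(c-1)$; the resulting multiplicative gap $1+\delta_0(c-1)>1$ gives \apx-hardness. The same argument applies verbatim to Rev-\tsp, since revisits cannot help a salesman traverse a chain of collinear subdivision points more cheaply than visiting them in order.

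The main obstacle I expect lies in the geometric realization: the edge-curves in $\R^3$ must be routed so that any subdivision point on one curve stays at distance $\Omega(1)$ from every unrelated point, independently of $G$. Three dimensions afford enough freedom; for instance, one can place all vertices of $G$ on a single line and route each edge as a pairwise-disjoint arc in a carefully chosen half-plane bounded by that line, so that any two points of $P$ not consecutive on the same chain lie at Euclidean distance $\Omega(1)$. Once this separation lemma is established, the gap constant as a function of $\alpha$ follows by a direct estimate, and the hardness holds uniformly for both \tsp$(d,\alpha)$ and Rev-\tsp$(d,\alpha)$ with $d\ge3$ and $\alpha>1$.
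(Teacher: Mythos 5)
Your high-level strategy --- reducing from a $\{1,2\}$-type {\tsp} and realizing it as a densely subdivided one-dimensional complex in $\R^3$ --- is exactly the route the paper takes, but your cost accounting has a gap that is fatal as stated. The tour must visit \emph{every} subdivision point of \emph{every} one of the $3n/2$ edge gadgets, not only the $n$ gadgets corresponding to the Hamilton cycle it induces on $V$, and your estimate ``a YES-instance yields a tour of normalized cost at most $n$'' ignores the $n/2$ chains off the cycle. Worse, the two requirements you place on the spacing pull in opposite directions. With unit spacing, traversing a single chain costs $m$ (not $1$), so every tour costs $\Theta(nm)$ while the signal separating YES from NO instances is only $\Theta(n)$; the multiplicative gap tends to $1$ as $m$ grows, and you need $m$ large. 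If instead you shrink the spacing so that a chain costs $o(1)$ --- which is what $\alpha>1$ buys you, since $m$ segments of length $s$ cost $ms^\alpha$ --- then, because your chains physically join $p_u$ to $p_v$, any connected $G$ admits a covering walk of vanishing cost (certainly for Rev-\tsp), and Hamiltonicity is not encoded at all. A secondary issue is that you have no gadget metering the weight-$2$ connections: the cost of jumping between non-adjacent vertices is the raw $|p_up_v|^\alpha$, which for vertices on a line varies from a constant up to $\Theta(n^\alpha)$, so the single constant $c(\alpha)$ in your gap computation does not exist without further work.

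The paper's construction resolves both problems with one structural idea you are missing: it separates \emph{free local travel} from \emph{metered jumps}. Each vertex $v_i$ gets a long spine, and each edge $e_k=v_iv_j$ is realized by \emph{two} bones, one hanging off each spine, which approach each other but stop short, leaving an \emph{un-subdivided} gap of Euclidean length $\delta_k$ chosen so that $\delta_k^{\alpha}=w(e_k)$; jumping that gap costs exactly the edge weight. All spines and bones are subdivided so densely ($n^5$ cities per unit length, so each local step costs $n^{-2\cdot 5}$ when $\alpha=2$) that covering an entire city cluster --- spine plus all incident bones, including those of edges the tour does not ``use'' --- costs $O(1/n)$ in total. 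Hence the only non-negligible costs in any tour are the inter-cluster jumps across gaps, of which at least $n$ are needed, each costing exactly $w(e_k)$; this makes the optimal geometric cost $n+\ell+O(1/n)$ and the reduction genuinely gap-preserving. Without an analogue of the un-subdivided, weight-calibrated gap and the ``everything else is free'' estimate, your reduction does not preserve the approximation gap.
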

\begin{proof}
  We only discuss the case $d=3$ and $\alpha=2$---all other cases can
  be settled by slightly modified arguments---\tsp.
  and we only consider Rev-\tsp; a similar reduction can be used for \tsp.
  We reduce from $\{1,2\}$-\tsp, the {\tsp} on the complete
  graph where the weight of every edge is either 1 or~2; this problem
  is \apx-hard \cite{1PapadimitriouY1993}.  An instance of
  $\{1,2\}$-{\tsp} consists of the complete graph~$K_n=(V_n,E_n)$ with
  vertex set $V_n=\{v_1,\dots,v_n\}$, edge set $E_n=\{e_1,\dots,e_m\}$
  where $m=n(n-1)/2$, and edge lengths that are specified by a weight
  function $w: E_n \rightarrow \{1,2\}$.  Given $K_n$ and $w$, we
  construct a corresponding instance $P_{n,w}\subset\R^3$ of
  Rev-\tsp$(3,2)$.

  We start our construction by introducing several auxiliary line
  segments.
  For each vertex $v_i \in V_n$ we define its \emph{spine} to be the
  vertical line segment going from point $(ni,ni,n)$ to point
  $(ni,ni,nm)$.  For each edge $e_k=v_iv_j\in E_n$ with $i<j$, we
  define two corresponding line segments that are parallel to the
  $xy$-plane and that are called \emph{bones}.  The first bone
  connects point $(ni,ni,nk)$ on the spine of~$v_i$ to the point
  $(nj,ni,nk)$.  The other bone connects point $(nj,nj,nk)$ on the
  spine of~$v_j$ to the point $(nj,ni-\delta_k,nk)$, where
  $\delta_k=1$ if $w(e_k)=1$ and $\delta_k=\sqrt{2}$ if $w(e_k)=2$.
  Note that these two bones do not quite touch; they are separated by
  a gap of length~$\delta_k$.

  In order to get the instance~$P_{n,w}$ of Rev-\tsp$(3,2)$, we
  subdivide every single (spine or bone) line segment introduced above
  by a dense, evenly distributed set of points---we call these points
  \emph{cities} from now on---so that every unit-length piece
  receives~$n^5$ cities.  The distance between adjacent cities is
  $1/n^5$, and so the cost for going from one city to an adjacent city
  is~$1/n^{10}$.  All these cities together form instance~$P_{n,w}$,
  and this completes our construction.  Since we have introduced line
  segments with a total length of at most $n\cdot n(m-1) +
  m\cdot2n(n-1)<2n^4$, the overall number of cities is at most $2n^9$.

  For $1\le i\le n$ we call the cities on the spine of~$v_i$ and on
  all bones incident to this spine the \emph{city cluster} of~$v_i$.
  Traversing all cities within such a city cluster is very cheap; even
  if we visit every city twice, this costs at most $2 \cdot
  2n^9/n^{10}=4/n$ for all cities in all city clusters together.  In a
  traveling salesman tour, the only expensive steps occur when the
  salesman jumps from one city cluster to another city cluster.  By
  the above definition of~$\delta_k$, when jumping from bone to bone
  across the gap corresponding to edge~$e_k$ the incurred cost is
  exactly~$w(e_k)$.  Note that jumping from city cluster to city
  cluster in any other way would be much more expensive and would thus
  not reduce the total cost of the tour.

  Finally, let us show that our reduction is approximation preserving.
  Fix an $\eps$ with $0<\eps<1$.  Consider an instance $K_n$ and $w$
  of $\{1,2\}$-{\tsp}, and assume without loss of generality that
  $n>4/\eps$.  Consider an optimal tour $\pi_0$ for this instance.  If
  $\pi_0$ uses $\ell\ge0$ edges of length~$2$ and $n-\ell$ edges of
  length~$1$, then it has cost $n+\ell$.  Given a PTAS for Rev-{\tsp},
  we show how to compute in polynomial time a tour of cost at most
  $(1+\eps)(n+\ell)$ for $K_n$ and $w$.

  First note that the tour~$\pi_0$ can be transformed into a
  tour~$\pi_1$ through~$P_{n,w}$ that makes~$\ell$ jumps of cost~2 and
  $n-\ell$ jumps of cost~1.  That tour $\pi_1$ costs at most $n+\ell +
  4/n$.  Using our hypothetical PTAS for Rev-\tsp, we can compute for
  any $\eps'>0$ in polynomial time a tour~$\pi_2$ through~$P_{n,w}$ of
  cost at most $(1+\eps')c_\mathrm{opt}$, where~$c_\mathrm{opt}$ is
  the cost of an optimal Rev-\tsp\ tour.  The existence of~$\pi_1$
  yields $c_\mathrm{opt}\le n+\ell + 4/n$.  The tour~$\pi_2$ can be
  transformed into a tour~$\pi_3$ through $K_n$: Just map the jumps
  of~$\pi_2$ to the corresponding edges of~$K_n$.  Since this mapping
  cannot increase the cost, tour~$\pi_3$ costs at most
  $(1+\eps')(n+\ell + 4/n)$.  Choosing~$\eps'=\eps/2$ and using
  $4/n<\eps<1$, we can bound the cost of~$\pi_3$ from above by
  \begin{eqnarray*}
    \left(1+\frac{\eps}{2}\right)(n+\ell)+\left(1+\frac{\eps}{2}\right)\eps
    &=&   \left(1+\frac{\eps}{2}\right)(n+\ell)+\frac{\eps}{2}(2+\eps)
    ~=~ (1+\eps)(n+\ell)
  \end{eqnarray*}
  as desired.  Like~$\pi_2$, the tour~$\pi_3$ may visit vertices more
  than once.  This can be fixed by greedily introducing shortcuts.  The
  shortcuts do not increase the cost of the tour since the weight
  function~$w$ (trivially) fulfills the triangle inequality.
\end{proof}

\begin{theorem}
  \label{thm:ptas}
  There exists a PTAS for Rev-\tsp$(2,\alpha)$ for any $\alpha\ge 2$.
\end{theorem}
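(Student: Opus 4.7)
My plan is to adapt Arora's PTAS framework for Euclidean \tsp~\cite{arora} to the power-distance setting, using the flexibility afforded by the revisiting allowance of Rev-\tsp\ to handle auxiliary stops at portal points.

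Scale the instance so that all cities lie in a box $[0,L]^2$ with $L=n^{O(1)}$, and impose a randomly shifted quadtree dissection of depth $\Theta(\log L)$. On each dissection segment place $m=\Theta((\log L)/\eps)$ equally spaced \emph{portals}. I then restrict attention to portal-respecting, $r$-light tours: closed walks through the cities (with auxiliary stops at portals) that cross each dissection segment at most $r=\Theta(1/\eps)$ times, and only at portal locations. Since Rev-\tsp\ permits revisits, these auxiliary portal stops are legal, with the straight leg between any two consecutive stops charged at cost $|\cdot|^\alpha$ as usual.

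The heart of the argument is a structure theorem: with constant probability over the random shift, the cheapest portal-respecting $r$-light tour has cost at most $(1+\eps)\opt$. To prove it I start from an optimal Rev-\tsp\ tour and perform two modifications. (i)~\emph{Deflection.} Redirect each crossing of a dissection segment to the nearest portal; a first-order expansion of $x\mapsto x^\alpha$ shows that the cost change per deflection is $O((s/m)\cdot|e|^{\alpha-1})$ for the incident tour edge, which sums to $O((\log L)/m)\cdot\opt\le\eps\opt$ for our choice of $m$. (ii)~\emph{Patching.} Whenever a dissection segment $S$ of length $s$ is crossed more than $r$ times, group the excess crossings into consecutive bundles along $S$ and collapse each bundle by a short bridge running close to $S$. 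Because the total bridge length is at most $s$ and the bundling is balanced, the super-additivity of $x^\alpha$ yields a patching cost of only $O(s^\alpha/r^{\alpha-1})$ per segment; summing over the dissection and using the standard random-shift amortization against $\opt$ bounds the total patching overhead by $O(\eps\opt)$ provided $\alpha\ge 2$ and $r=\Theta(1/\eps)$.

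Given the structure theorem, I compute the optimum portal-respecting $r$-light tour by a standard dynamic program over the dissection tree. For each cell I enumerate the \emph{boundary interface}: the multiset of portals used on the four sides together with the matching specifying how these portals are linked by the tour inside the cell. Since each dissection segment contributes $O(m)$ portal choices and at most $r$ crossings, the number of interfaces per cell is $(O(m))^{O(r)} = n^{O(1/\eps)}$, and the DP runs in polynomial time for every fixed $\eps>0$.

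The main obstacle will be the patching lemma and its random-shift analysis. Arora's original patching has cost linear in the segment length $s$, independently of $r$, which is too weak once the cost is measured in $|\cdot|^\alpha$ rather than Euclidean distances. The balanced-bundling strategy uses super-additivity of $x^\alpha$ critically, and the hypothesis $\alpha\ge 2$ is precisely what makes the resulting $O(s^\alpha/r^{\alpha-1})$ bound decay in $r$ fast enough for the customary amortization against $\opt$ to absorb both the deflection and the patching overhead.
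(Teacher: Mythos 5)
Your approach---porting Arora's quadtree/portal scheme directly to the $|\cdot|^\alpha$ cost---is genuinely different from the paper's, but it has a real gap at exactly the step you yourself flag as ``the main obstacle,'' and the gap is structural rather than technical. Both your deflection bound and your patching bound are per-crossing overheads that you propose to sum and then amortize ``against $\opt$'' via the standard random-shift argument. But that standard argument charges these overheads against the \emph{Euclidean} length of the tour (the expected number of grid-line crossings is proportional to Euclidean length), and under $|\cdot|^\alpha$ the quantity $\opt=\sum_e|e|^\alpha$ can be arbitrarily smaller than $\sum_e|e|$: as the introduction notes (Bollob\'as's problem~\#124), $n$ points in the unit square always admit a tour of squared cost at most~$4$, even though every tour may have Euclidean length $\Omega(\sqrt n)$. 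So ``(per-crossing overhead) times (number of crossings) $=O(\eps)\cdot(\text{Euclidean length})$'' bounds nothing in terms of $\opt$, and you give no substitute charging scheme. Concretely, your deflection estimate $O((s/m)\,|e|^{\alpha-1})$ already fails for tour edges shorter than the portal spacing, where the overhead per crossing is $\Theta((s/m)^\alpha)$; and your patching bound $O(s^\alpha/r^{\alpha-1})$ presumes the excess crossings can be bundled into groups of small diameter, which you cannot arrange---if the crossings are spread evenly along $S$, the bridges you must pay for have lengths dictated by the crossing positions, and their $\alpha$-th powers do not telescope the way Euclidean lengths do. Neither lemma is proved, and the second is the heart of the matter.

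The paper avoids all of this with a one-paragraph reduction that uses $\alpha\ge2$ and the revisit allowance in one clean place each. Consider the Gabriel graph $G_P$: $pq$ is an edge iff the open disk with diameter $pq$ contains no other point of $P$. If some $r\in P$ lies in that disk, then $\angle prq\ge\pi/2$, hence $|pr|^2+|rq|^2\le|pq|^2$, and since $\alpha/2\ge1$ also $|pr|^\alpha+|rq|^\alpha\le|pq|^\alpha$; so replacing the leg $pq$ by $pr,rq$ never increases the cost, and the replacement is legal precisely because revisiting $r$ is allowed. Hence some optimal Rev-\tsp\ tour uses only Gabriel edges, $G_P$ is planar, and the known PTAS for weighted planar graph \tsp~\cite{AroraGKKPW1998} finishes the proof. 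If you want to rescue a quadtree-based argument, you would need a structure theorem whose error terms are charged to $\sum_e|e|^\alpha$ rather than to $\sum_e|e|$; the paper's conclusion indicates that even a quasi-PTAS along such lines is open for $1<\alpha<2$.
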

\begin{proof}
  Given a set $P$ of points in the plane, consider the \emph{Gabriel
    graph} $G_P$ that has a vertex for each point in~$P$.  There is an
  edge between points~$p$ and~$q$, if the open disk with diameter $pq$
  is empty, in other words, if for all points $r \in P \setminus
  \{p,q\}$, the angle $\angle prq$ is at most $\pi/2$.  The weight of
  the edge is $|pq|^\alpha$.  Note that $|pr|^\alpha+|rq|^\alpha \le
  |pq|^\alpha$ if $\angle prq$ is at least $\pi/2$.  Therefore, there
  is an optimal {\tsp} tour with revisits through~$P$ that only uses
  the edges of $G_P$: Indeed, if a tour uses an edge $pq$ for which
  there is a point~$r$ with $\angle prq \geq \pi/2$, then replacing
  $pq$ by $pr$ and $rq$ would shorten the tour.  Such a replacement is
  feasible since revisiting city~$r$ is allowed.
  The Gabriel graph is planar.  Hence we end up
  with an instance of the {\tsp} on weighted planar graphs, for which
  a PTAS is known~\cite{AroraGKKPW1998}.
\end{proof}

Recall that a \emph{quasi-}PTAS is an approximation scheme with
running time $n^{\mathrm{polylog}\;n}$, where $n$ is the size of the
input.  The following result follows immediately from the facts that
(a)~the metric $|\cdot|^\alpha$ has bounded doubling dimension and
(b)~\tsp\ on metrics of bounded doubling dimension admits a
quasi-PTAS~\cite{Talwar2004}.

\begin{theorem}
  There exists a quasi-PTAS for Rev-\tsp$(d,\alpha)$ for any
  $\alpha\in (0,1]$ and 
  $d\ge1$.
\end{theorem}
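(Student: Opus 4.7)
The plan is to combine the two ingredients indicated just after the statement: for $\alpha \in (0,1]$ the distance $|\cdot|^\alpha$ is an actual metric on $\R^d$, and the resulting metric space has bounded doubling dimension, whence Talwar's quasi-PTAS applies directly.

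First I would verify the triangle inequality. The map $t \mapsto t^\alpha$ is concave and non-decreasing on $[0,\infty)$ for $\alpha \in (0,1]$ and vanishes at~$0$, hence it is subadditive: $(x+y)^\alpha \le x^\alpha + y^\alpha$ for all $x,y \ge 0$. Combined with the Euclidean triangle inequality and monotonicity this gives $|pq|^\alpha \le (|pr|+|rq|)^\alpha \le |pr|^\alpha + |rq|^\alpha$. A useful consequence is that revisits never help: detouring from~$p$ through a third city~$r$ on the way to~$q$ only increases cost. Therefore an optimal Rev-\tsp\ tour on the input set $P$ has the same cost as an optimal \tsp\ tour on the finite metric space $(P,|\cdot|^\alpha)$, and it suffices to design a quasi-PTAS for \tsp\ on this metric.

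Next I would bound the doubling dimension of $(\R^d,|\cdot|^\alpha)$. A ball of radius~$r$ in this metric is exactly the Euclidean ball of radius~$r^{1/\alpha}$, and a ball of radius~$r/2$ is the Euclidean ball of radius $(r/2)^{1/\alpha}=r^{1/\alpha}/2^{1/\alpha}$. A standard packing argument in Euclidean $\R^d$ shows that a Euclidean ball of radius~$R$ can be covered by $2^{O(d/\alpha)}$ Euclidean balls of radius $R/2^{1/\alpha}$. Hence the doubling dimension of $(\R^d,|\cdot|^\alpha)$ is $O(d/\alpha)$, which is a constant since both $d$ and~$\alpha$ are fixed. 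Restricting to~$P$ can only decrease the doubling dimension.

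Finally, I would invoke Talwar's quasi-PTAS~\cite{Talwar2004} for \tsp\ on metrics of bounded doubling dimension and run it on the finite metric $(P,|\cdot|^\alpha)$: its running time is $n^{\mathrm{polylog}\,n}$ (with the polylog factor depending on $d$, $\alpha$ and~$\eps$), and it produces a tour of cost at most $(1+\eps)$ times optimal. By the first step, this tour is also a $(1+\eps)$-approximate Rev-\tsp\ tour on the original input. There is no real obstacle; the only part that requires care is the doubling-dimension computation, but as sketched above it reduces to a routine Euclidean packing bound.
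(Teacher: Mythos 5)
Your proposal is correct and follows exactly the route the paper intends: the paper itself gives no proof beyond noting that $|\cdot|^\alpha$ has bounded doubling dimension and citing Talwar's quasi-PTAS, and you have correctly filled in the routine details (subadditivity of $t\mapsto t^\alpha$ giving the triangle inequality, hence the equivalence of Rev-\tsp{} and \tsp{} on this metric, and the packing argument bounding the doubling dimension by $O(d/\alpha)$). Nothing is missing.
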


\section{Conclusions}

In order to construct considerably better approximation algorithms for
\tsp$(d,\alpha)$, we expect that substantially different methods of
analysis have to be found.
A result of Van Nijnatten \cite[Theorem~4.19]{nijnatten} indicates that
there is not much room left for improvement as long as we compare to
the MST.

The approximability of Rev-\tsp$(2,\alpha)$ for $1<\alpha<2$ is an
interesting open question.
We believe that a (quasi)-PTAS may be
obtained using the framework of the PTAS for weighted planar graph
\tsp\ by Arora et al.~\cite{AroraGKKPW1998}. A simple reduction shows
that deriving a PTAS for our problem is at least as hard as deriving a PTAS for
weighted planar graph \tsp. Assume we have a PTAS for
Rev-\tsp$(2,\alpha)$ for some $\alpha>1$. Given a weighted planar
graph and a planar embedding, we replace each edge by a dense set of
points such that traversing
a subedge basically costs zero.  By making one subedge
of each edge~$e$ longer, we can make the cost of that subedge (and
thus of~$e$) in Rev-\tsp\ proportional to the weight of~$e$.
Then, the costs of the optimal solutions of the two problems will be
the same up to an arbitrarily
small constant factor of $1+\eps$. Such a reduction is polynomially
bounded if all weights are polynomially bounded,
which can be achieved by a standard rounding scheme.

A PTAS for Rev-\tsp$(2,\alpha)$ for any $\alpha>1$ would be an
interesting generalization of the existing PTAS's for weighted planar
graphs. Ideally, one would have a PTAS with running time independent
of $\alpha$ since it would contain both Euclidean \tsp\ and weighted
planar graph \tsp\ as special cases.

{\small

}


\begin{thebibliography}{10}

\bibitem{symmetric_fork}
E.~Althaus, G.~C{\u{a}}linescu, I.~Mandoiu, S.~Prasad, N.~Tchervenski, and
  A.~Zelikovsky.
\newblock Power efficient range assignment for symmetric connectivity in static
  ad hoc wireless networks.
\newblock {\em Wireless Networks}, 12(3):287--299, 2006.

\bibitem{tsp_relaxed_improved}
T.~Andreae.
\newblock On the traveling salesman problem restricted to inputs satisfying a
  relaxed triangle inequality.
\newblock {\em Networks}, 38(2):59--67, 2001.

\bibitem{tsp_relaxed}
T.~Andreae and H.-J.~Bandelt.
\newblock Performance guarantees for approximation algorithms depending on
  parametrized triangle inequalities.
\newblock {\em SIAM J. Discrete Math.}, 8(1):1--16, 1995.

\bibitem{arora}
S.~Arora.
\newblock Polynomial time approximation schemes for {Euclidean} traveling
  salesman and other geometric problems.
\newblock {\em J. ACM}, 45(5):753--782, 1998.

\bibitem{AroraGKKPW1998}
S.~Arora, M.~Grigni, D.~Karger, P.~Klein, and A.~Woloszyn.
\newblock A polynomial-time approximation scheme for weighted planar graph
  {TSP}.
\newblock In {\em Proc. 9th Annu. ACM-SIAM Symp. Discr. Algo.}, p.~33--41, 1998.

\bibitem{tsp_relaxed_improved_further}
M.~A. Bender and C.~Chekuri.
\newblock Performance guarantees for the {TSP} with a parameterized triangle
  inequality.
\newblock {\em Inform. Process. Lett.}, 73(1-2):17--21, 2000.

\bibitem{bhksu-tnsah-02}
H.-J. B{\"o}ckenhauer, J.~Hromkovi{\v c}, R.~Klasing, S.~Seibert, and W.~Unger.
\newblock Towards the notion of stability of approximation for hard
  optimization tasks and the traveling salesman problem.
\newblock {\em Theor. Comput. Sci.}, 285(1):3--24, 2002.

\bibitem{Bollobas}
B.~Bollob\'as.
\newblock {\em The Art of Mathematics -- Coffee Time in Memphis}.
\newblock Cambridge Univ.~Press, 2006.

\bibitem{directional}
I.~Caragiannis, C.~Kaklamanis, E.~Kranakis, D.~Krizanc, and A.~Wiese.
\newblock Communication in wireless networks with directional antennas.
\newblock In F.~M. auf~der Heide and N.~Shavit, editors, {\em Proc. 20th Annu.
  ACM Symp. Parallel Algorithms Architect.}, p.~344--351, 2008.

\bibitem{et-ap-76}
K.~P. Eswaran and R.~E. Tarjan.
\newblock Augmentation problems.
\newblock {\em SIAM J. Comput.}, 5(4):653--665, 1976.

\bibitem{fuchs_hardness}
B.~Fuchs.
\newblock On the hardness of range assignment problems.
\newblock {\em Networks}, 52(4):183--195, 2008.

\bibitem{flnl-papwc-08}
S.~Funke, S.~Laue, R.~Naujoks, and Z.~Lotker.
\newblock Power assignment problems in wireless communication: Covering points
  by disks, reaching few receivers quickly, and energy-efficient travelling
  salesman tours.
\newblock In {\em Proc. 4th Int. IEEE Conf.
  Distributed Comput. Sensor Systems}, LNCS 5067, p.~282--295, 2008.

\bibitem{g-hawc-01}
L.~Godara.
\newblock {\em Handbook of Antennas in Wireless Communications}.
\newblock CRC Press, 2001.

\bibitem{ItaiPS1982}
A.~Itai, C.~Papadimitriou, and J.~Szwarcfiter.
\newblock Hamilton paths in grid graphs.
\newblock {\em SIAM J. Comput.}, 4:676--686, 1982.

\bibitem{KarpSteele}
R.~Karp and J.~Steele.
\newblock Probabilistic analysis of heuristics.
\newblock In E.~Lawler, J.~Lenstra, A.~R. Kan, and D.~Shmoys, editors, {\em The
  Traveling Salesman Problem}, chapter~6, p.~181--205. John Wiley, 1985.

\bibitem{kirousis_mst}
L.~M. Kirousis, E.~Kranakis, D.~Krizanc, and A.~Pelc.
\newblock Power consumption in packet radio networks.
\newblock {\em Theoret. Comput. Sci.}, 243(1--2):289--305, 2000.

\bibitem{nasipuri}
A.~Nasipuri, K.~Li, and U.~R. Sappidi.
\newblock Power consumption and throughput in mobile ad hoc networks using
  directional antennas.
\newblock In {\em Proc. 11th IEEE Conf. Comput. Commun. \&
  Networks}, p.~620--626, 2002.

\bibitem{nijnatten}
F.~van Nijnatten.
\newblock Range assignment with directional antennas.
\newblock Master's thesis, TU Eindhoven, 2008.
\newblock
  \url{http://alexandria.tue.nl/extra1/afstversl/wsk-i/nijnatten2008.pdf}.
  
\bibitem{papadimitriou1977ets}
C.~Papadimitriou.
\newblock The {E}uclidean traveling salesman problem is {NP-complete}.
\newblock {\em Theoret. Comput. Sci.}, 4(3):237--244, 1977.

\bibitem{1PapadimitriouY1993}
C.~Papadimitriou and M.~Yannakakis.
\newblock The traveling salesman problem with distances one and two.
\newblock {\em Math. Oper. Res.}, 18(1):1--11, 1993.

\bibitem{ps-iaaw2-97}
M.~Penn and H.~Shasha-Krupnik.
\newblock Improved approximation algorithms for weighted 2- and 3-vertex
  connectivity augmentation problems,.
\newblock {\em J. Algorithms}, 22(1):187--196, 1997.

\bibitem{Ramanathan01onthe}
R.~Ramanathan.
\newblock On the performance of ad hoc networks with beamforming antennas.
\newblock In {\em Proc. 2nd ACM Int. Symp. Mobile Ad Hoc Networking \&
  Comput.}, p.~95--105, 2001.

\bibitem{rao-smith}
S.~B. Rao and W.~D. Smith.
\newblock Approximating geometrical graphs via ``spanners'' and ``banyans''.
\newblock In {\em Proc. 30th Annu. ACM Symp. Theory Comput.}, pages
  540--550, 1998.

\bibitem{sekanina1960osv}
M.~Sekanina.
\newblock On an ordering of the set of vertices of a connected graph.
\newblock {\em Publications of the Faculty of Science, University of Brno},
  412:137--142, 1960.

\bibitem{Talwar2004}
K.~Talwar.
\newblock Bypassing the embedding: Algorithms for low dimensional metrics.
\newblock In {\em Proc. 36th Annu. ACM Symp. Theory Comput.}, pages
  281--290, 2004.


\end{thebibliography}
\end{document}